\newcommand{\Rpos}{\mathbb{Q}^+_0}
\newcommand{\NP}{\textnormal{NP}}
\newcommand{\coNPpoly}{\textnormal{coNP/poly}}
\newcommand{\problemdef}[3]{
  \setlength{\pltopsep}{2ex}
  \begin{compactdesc}
    \item[\normalfont\textsc{#1}]
    \item[Input:] #2
    \item[Question:] #3
  \end{compactdesc}
  \setlength{\pltopsep}{0ex}
}
\newcommand{\DV}{\textsc{Dis\-tinct Vec\-tors}}
\newcommand{\hhDV}[2]{\textsc{Bi\-na\-ry} $(#1,#2)$-\textsc{Dis\-tinct Vec\-tors}}
\newcommand{\HS}{\textsc{Hitting Set}}
\newcommand{\fHS}{\textsc{$H$-Hitting Set}}
\newcommand{\SC}{\textsc{Set Cover}}
\newcommand{\DIS}{\textsc{Distance-$3$ Independent Set}}
\newcommand{\IM}{\textsc{Induced Matching}}
\newcommand{\IS}{\textsc{Independent Set}}
\newcommand{\N}{\mathbb{N}}
\newcommand{\W}[1]{\textnormal{W[#1]}}
\newcommand{\FPT}{\textnormal{FPT}}
\newcommand{\mbf}[1]{\mathbf{#1}}
\newcommand{\Scol}[2][S]{\ensuremath{#1_{|#2}}}
\newcommand{\row}[2]{\ensuremath{#1_{#2}}}
\newcommand{\col}[2]{\ensuremath{#1_{\ast #2}}}
\newcommand{\Ham}{\ensuremath{\Delta}}
\newcommand{\setsys}[1]{\ensuremath{\mathcal{W}_{#1}}}
\newcommand{\pet}[2]{\ensuremath{\widetilde{#1}_{#2}}}
\def\NAT@spacechar{~}
\newtheorem{theorem}{Theorem}
\newaliascnt{lemma}{theorem}  
\newtheorem{lemma}[lemma]{Lemma}  
\newaliascnt{corollary}{theorem}
\newtheorem{corollary}[corollary]{Corollary}
\newaliascnt{observation}{theorem}  
\newtheorem{rrule}{Reduction Rule}
\newtheorem{definition}{Definition}
\title{Exploiting Hidden Structure in Selecting Dimensions that Distinguish Vectors\footnote{A preliminary version appeared under the title ``A
  Parameterized Complexity Analysis of Combinatorial Feature Selection
  Problems'' in the proceedings of
the 38th International Symposium on Mathematical Foundations of
Computer Science (MFCS '13), volume 8087 of Lecture Notes in Computer
Science, pages 445--456, Springer, 2013 \cite{FBNS13}.
Parts of this work originate from the first author's master's thesis
on combinatorial feature selection \cite{froese12}.
This article now exclusively
focuses on the \DV{} problem and provides all proofs
in full detail. It additionally contains a new main result for \DV{}
regarding a computational complexity dichotomy for the parameters minimum and maximum pairwise Hamming distance of the data points.}}
\author[1]{Vincent Froese\thanks{Supported by Deutsche Forschungsgemeinschaft, project DAMM (NI 369/13).}}
\affil[1]{Institut f\"ur Softwaretechnik und Theoretische Informatik,
  TU Berlin, Germany, \texttt{\{vincent.froese, rolf.niedermeier, manuel.sorge\}@tu-berlin.de}}
\author[2]{René van Bevern\thanks{Supported by Deutsche Forschungsgemeinschaft, project DAPA (NI 369/12).}}
\affil[2]{Novosibirsk State University, Novosibirsk, Russia, \texttt{rvb@nsu.ru}}
\author[1]{Rolf Niedermeier}
\author[1]{Manuel Sorge\protect\footnotemark[3]}
\date{}
\begin{document}
\maketitle
\renewcommand{\sectionautorefname}{Section}
\renewcommand{\subsectionautorefname}{Section}

\begin{abstract}

The NP-hard \DV{} problem asks to delete as many
columns as possible from a matrix such that all rows in the
resulting matrix are still pairwise distinct.
Our main result is that, for binary matrices, there is a complexity
dichotomy for \DV{} based on the maximum~($H$) and the
minimum~($h$) pairwise Hamming distance between matrix rows:
\DV{} can be solved in polynomial time if $H\leq 2\lceil h/2\rceil +1$, and
is NP-complete otherwise.
Moreover, we explore connections of \DV{} to hitting sets,
thereby providing several fixed-parameter tractability and
intractability results also for general matrices.
\end{abstract}


\section{Introduction}
Feature selection in a high-dimensional feature space means to choose a
subset of features (that is, dimensions) such that some desirable
data properties are preserved or achieved in the induced subspace.
\emph{Combinatorial} feature selection~\citep{KS96,charikar00} is a
well-motivated alternative to the more frequently studied affine feature
selection.  While \emph{affine} feature selection combines features to
reduce dimensionality, combinatorial feature selection simply discards
some features. The advantage of the latter is that
the resulting reduced feature space is easier to interpret. See
\citet{charikar00} for a more extensive discussion in favor of
combinatorial feature selection.
Unfortunately, combinatorial feature selection problems are typically
computationally very hard to solve (NP-hard and also hard to
approximate~\citep{charikar00}), resulting in the use of heuristic
approaches in practice~\citep{BL97,DDHJM07,For03,GE03}.

In this work, we adopt the fresh perspective
of parameterized complexity analysis. We thus refine the known picture
of the computational complexity landscape of a prominent and formally
simple combinatorial feature selection problem called \DV{}.

  \problemdef{\DV}
    {A matrix $S \in \Sigma^{n\times d}$ over a finite alphabet~$\Sigma$ with~$n$ distinct rows and $k\in\N$.}
    {Is there a subset~$K\subseteq [d]$ of column indices with~$|K|\le
      k$ such that all~$n$ rows in~$\Scol{K}$ are still distinct?}
Here,~$\Scol{K}$ is the submatrix containing only the columns with
indices in~$K$.
In the above formulation, the input data is considered to be a matrix
where the row vectors correspond to the data points and the columns
represent features (dimensions).
Thus, \DV{} constitutes the basic task to compress the data by discarding
redundant or negligible dimensions without losing the essential
information to tell apart all data points.

Intuitively speaking, the guiding principle of this work is to
identify problem-specific parameters (quantities such as the number of
dimensions to discard or the number of dimensions to keep) and to analyze
how these quantities influence the computational complexity of \DV{}.
The point here is that in relevant applications these parameters can
be small, which may allow for more efficient solvability.
Hence, the central question is whether \DV{} is computationally tractable in the case of small parameter values.

We are particularly interested in the complexity of \DV{} if the
range of differences between data points is small. This special case
occurs if the input data is in some sense homogeneous. We measure the range of differences as the gap~$H - h$ between the maximum~$H$ and the minimum~$h$ of pairwise Hamming distances of rows in the input matrix.\footnote{See \autoref{sec:DVbinary} for a formal definition.} We initiate the study of this measure by completely classifying the classical complexity of \DV{} with respect to constant values of~$H - h$ on binary input matrices. For general matrices, we derive various tractability and intractability results with respect to the parameters alphabet size~$|\Sigma|$, number of retained columns and number of discarded columns.


\paragraph*{Related Work}

\DV{} is also known as the \textsc{Minimal Reduct} problem in rough
set theory~\citep{pawlak91}
and it was already early proven to be \NP-hard by~\citet{skowron92}.
Later, \citet{charikar00} investigated the computational complexity of
several problems arising in the context of combinatorial feature
selection, including \DV{}.
Seemingly unaware of Skowron and Rauszer's work, they showed that
there exists a constant~$c$ such that it is NP-hard to approximate
\DV{} in polynomial time within a factor of $c\log d$.

Another combinatorial feature selection problem called
\textsc{Minimum Feature Set} is a variant of \DV{} where not all pairs of rows have to be
distinguished but only all pairs of rows from two specified subsets.
This problem is known to be NP-complete for binary input data \cite{davies94}.
In addition, \citet{cotta03} investigated the parameterized complexity of
\textsc{Minimum Feature Set} and proved \W{2}-completeness with respect to the
number of selected columns even for binary matrices.

\paragraph*{Results and Outline}
\autoref{tab:results} summarizes our results. We first focus on the
case of input matrices over binary alphabets, that is~$|\Sigma| = 2$,
in \autoref{sec:DVbinary}. As our main result, we completely classify the classical computational
complexity of (binary) \DV{} according to the gap between~$H$ and~$h$.
This yields the following dichotomy:
If~$H \le 2\lceil h/2\rceil+1$, then \DV{} is polynomial-time
solvable, whereas it is \NP-complete in all other cases.
The corresponding \NP-completeness proof also implies
\W{1}-hardness with respect to the parameter ``number~$t=d-k$ of columns to
discard''.

In \autoref{sec:DVgeneral} we consider general alphabets, that is, $|\Sigma| \geq 2$. We prove that, here, \DV{} is \W{2}-hard with respect to the number~$k$ of
retained columns if the alphabet size is unbounded. Moreover, \DV{} cannot be solved in~$d^{o(k)}(nd)^{O(1)}$ time, unless $\W{1}=\FPT$ (which is strongly believed not
to be the case~\cite{downey13}). In contrast to these hardness results, we develop polynomial-time data reduction algorithms and show fixed-parameter tractability by providing superexponential-size problem kernelizations with respect to the
combined parameters $(|\Sigma|,k)$ and~$(H,k)$. We also exclude polynomial-size problem kernels with respect to the parameter combination $(n,|\Sigma|,k)$ based on the hypothesis that $\text{NP}\not\subseteq\text{coNP/poly}$ (which is believed to be true, since otherwise the polynomial hierarchy collapses to its third level). Finally, as a simple observation, we also give a linear-time factor-$H$ approximation algorithm.

Our notation is explained in \autoref{sec:prelim}. \autoref{sec:conc} concludes with some challenges for future research.

\begin{table}[t]
  \centering
  \begin{threeparttable}
    \caption{Overview of our results.}
    \label{tab:results}
    \begin{tabular}{ll}
      \toprule
      Result\tnote{*} & Reference\\
      \midrule
      \NP-hard for $|\Sigma|=2$ and $H\ge
      2\lceil h/2\rceil+2$ & \autoref{thm:abDV_NP-hard}\\
      poly-time for $|\Sigma|=2$ and $H\le 2\lceil
      h/2\rceil+1$ & \autoref{thm:hhDV-poly}\\
      \W{1}-hard wrt.~$t$ for $|\Sigma|=2$ and $H\geq 4$ & \autoref{cor:distinct_vectors_w1-hard_t}\\
      \W{2}-hard wrt.~$k$ ($|\Sigma|$ unbounded) & \autoref{thm:distinct_vectors_w2-hard}\\
      \FPT{} wrt.~$(|\Sigma|,k)$ (no poly kernel wrt.~$(n,|\Sigma|,k)$) & \autoref{thm:dv_kernel_sk}\\
      \FPT{} wrt.~$(H,k)$ (for arbitrary $\Sigma$) & \autoref{thm:dv_kernel_hk}\\
      \bottomrule
    \end{tabular}
    \begin{tablenotes}
      \footnotesize
    \item[*] $|\Sigma|$: alphabet size, $h$ ($H$): minimum (maximum)
      pairwise row Hamming distance of the input matrix,
      $t$: number of discarded columns, $k$: number of retained columns
    \end{tablenotes}
  \end{threeparttable}
\end{table}


\section{Preliminaries}
\label{sec:prelim}

\paragraph*{Notation}
For~$n\in\N$, let~$[n]:=\{1,\ldots,n\}$. The set of all size-$k$
subsets of a set~$X$ is denoted by~$\binom{X}{k}$.
In the following, we consider finite alphabets~$\Sigma\subseteq\Rpos$.
We denote by~$S=(s_{ij})\in\Sigma^{n\times d}$ the matrix with~$n$
rows and~$d$ columns, where $s_{ij}\in\Sigma$ denotes the entry in the
$i$-th row and the~$j$-th column.
We denote the~$i$\nobreakdash-th row vector by~$\row{s}{i}$ and the $j$-th
column vector by~$\col{s}{j}$. For subsets~$I\subseteq [n]$ and $J\subseteq [d]$ of row and column
indices, we write~$S[I,J]:=(s_{ij})_{(i,j)\in I\times J}$ for the~$|I|\times |J|$
submatrix of~$S$ containing only the rows with indices in~$I$ and the
columns with indices in~$J$. We use the
abbreviation~$\Scol{J}:=S[[n],J]$ for the submatrix containing all rows
but only the columns in~$J$ and we say that the columns
in~$[d]\setminus J$ are \emph{discarded} (or deleted).
For a vector~$x\in\Sigma^d$, we denote by~$(x)_j\in\Sigma$ the $j$-th
entry of~$x$. The null vector is denoted by~$\mbf{0}:=(0,\ldots,0)$.

Throughout this work, we assume that arithmetic operations such as
additions and comparisons of numbers can be done in~$O(1)$ time
(that is, we use the RAM model \cite{Pap94}).

\paragraph*{Parameterized Complexity}
We assume the reader to be familiar with the basic concepts from classical
complexity theory, such as NP-hardness and polynomial-time reductions
\cite{garey79, Pap94}.
The computational complexity of a parameterized problem
is measured in terms of two quantities: one is the input size, the
other is the \emph{parameter} (usually a positive integer).  A
parameterized problem~$L\subseteq \Sigma^*\times \mathbb N$ is called
\emph{fixed-parameter tractable} with respect to a parameter~$k$ if it
can be solved in~$f(k)\cdot |I|^{O(1)}$ time, where~$f$ is a
computable function only depending on~$k$, and~$|I|$ is the size of the
input instance~$I$.
A \emph{problem kernel} for a parameterized problem~$P$ is a
polynomial-time self-reduction, that is, given an instance~$(I,k)$,
it outputs another instance~$(I',k')$ of~$P$
such that~$|I'|+k' \le g(k)$ for some computable function~$g$
depending only on~$k$, and~$(I,k)$ is a yes-instance of~$P$ if and
only if~$(I',k')$ is a yes-instance of~$P$.
The function~$g$ is called the size of the problem kernel. If~$g$
is a polynomial, then we speak of a polynomial kernel.
Existence of a problem kernel is equivalent to fixed-parameter
tractability~\citep{Cyg15,downey13,flum06,niedermeier06}. 

A \emph{parameterized reduction}
from a parameterized problem~$P$ to another parameterized problem~$P'$
is a function that, given an instance~$(I,k)$ of~$P$, computes in~$f(k)\cdot
|I|^{O(1)}$ time an instance~$(I',k')$ (with~$k'$ only depending on~$k$)
such that~$(I,k)$ is a yes-instance of~$P$ if and only if~$(I',k')$ is
a yes-instance of~$P'$.  The two basic complexity classes for showing
(presumable) fixed-parameter intractability are called~W[1] and~W[2];
there is good complexity-theoretic reason to believe that W[1]-hard
and W[2]-hard problems are not fixed-parameter
tractable~\citep{Cyg15,downey13,flum06,niedermeier06}.

\section{Binary Matrices and the Range of Differences}
\label{sec:DVbinary}

  Throughout this section, we focus on instances with a binary
  input alphabet, say, without loss of generality, $\Sigma=\{0,1\}$.
  We analyze the computational complexity with respect to the range of differences between input data points. To this end, we consider instances where the
  Hamming distance of each pair of rows lies within a prespecified
  range. In other words, the number of columns in which a given pair of
  rows differs shall be bounded from below and above by some
  constants~$\alpha,\beta\in \N$. We first give the formal definitions and then completely classify the classical complexity of \DV{} with respect to the gap between $\alpha$ and~$\beta$. The NP-complete cases are given in \autoref{sec:binDVNPc} and the polynomial cases in \autoref{sec:binDVpoly}. The formal definitions for our setup
  are the following.

  \begin{definition}[Weight]
    \label{def:weight}
    For a vector~$x\in\{0,1\}^d$, we denote by~$W_x := \{j\in [d]\mid
    (x)_j = 1\}$ the set of indices where~$x$ equals~1 and we call~$w(x) :=
    |W_x|$ the \emph{weight} of~$x$.
  \end{definition}

  \begin{definition}[Hamming Distance]
    \label{def:hamming}
    For vectors~$x,y \in \Sigma^d$, let~$D_{xy}:=\{j\in [d]\mid
    (x)_j \neq (y)_j\}$ be the set of indices where~$x$ and~$y$ differ
    and let~$\Ham(x,y) := |D_{xy}|$ denote the \emph{Hamming distance}
    of~$x$ and~$y$.    
  \end{definition}

  Note that, for~$x,y\in\{0,1\}^d$, it holds~$D_{xy}=(W_x\cup
  W_y)\setminus (W_x\cap W_y)$ and thus~$\Ham(x,y) = w(x) + w(y)
  - 2|W_x\cap W_y|$. For a \DV{} instance~$(S\in\Sigma^{n\times d},k)$,
we define the parameters \emph{minimum pairwise row Hamming distance}~$h:=\min_{i\neq
  j\in[n]}\Ham(\row{s}{i},\row{s}{j})$
and \emph{maximum pairwise row Hamming distance}~$H:=\max_{i\neq
  j\in[n]}\Ham(\row{s}{i},\row{s}{j})$.
To conveniently state our results, let us now define a variant of \DV{} with minimum pairwise row Hamming distance~$\alpha$ and maximum pairwise row Hamming distance~$\beta$:

 \problemdef{\hhDV{\alpha}{\beta}}
    {A matrix $S\in\{0,1\}^{n\times d}$ with~$n$ distinct rows such that
     $\alpha=h \le H = \beta$, and $k\in\N$.}
    {Is there a subset $K\subseteq [d]$ of column indices with
     $|K|\leq k$ such that all rows in $\Scol{K}$ are still distinct?}
  
  Intuitively, if the matrix consists of rows that are all ``similar'' to each other,
  one could hope to be able to solve the instance efficiently since
  there are at most~$\beta$ columns to choose from in order to
  distinguish two rows.
  The minimum pairwise row Hamming distance~$\alpha$ plays a dual role in
  the sense that, if~$\alpha$ is large, then each pair of rows differs
  in many columns, which also could make the instance easily solvable.
  The following theorems, however, show that this intuition is
  somewhat deceptive in the sense that \hhDV{\alpha}{\beta} is NP-hard
  even for small constants~$\alpha$ and $\beta$.
  Despite this intimidating NP-hardness result, we perform a closer inspection of the
  relation between the minimum and maximum pairwise row Hamming
  distance and show that it is possible to solve some cases in
  polynomial time for arbitrarily large constants~$\alpha$, $\beta$.
  These results are obtained by applying
  combinatorial arguments from extremal set theory
  revealing a certain structure of the input matrix if the values of~$\alpha$
  and~$\beta$ are close to each other, that is, the range of differences is small.
  Analyzing this structure, we can show how to find solutions in polynomial time.


  \begin{figure}[t]
    \centering
    \begin{tikzpicture}[scale=.4]
      \foreach \x in {1,...,10} {
        \draw[help lines] (0,\x-1) rectangle (\x,\x);
        \draw[help lines] (\x-1,\x-1) rectangle (\x,10);
      }
      \draw[->] (0,0) -- (0,10);
      \draw[->] (0,0) -- (10,0);
      \node (a) at (10,-0.5) {$\alpha$};
      \node (b) at (-0.5,10) {$\beta$};
      \filldraw[fill=gray!20!white,opacity=.7]
        (0,10) -- (0,3) -- (2,3) --
        (2,5) -- (4,5) -- (4,7) --
        (6,7) -- (6,9) -- (8,9) -- (8,10);
      \foreach \x in {1,...,9}{
        \node[font=\sffamily\scriptsize] (x) at (\x-0.5,-0.5) {\x};
        \node[font=\sffamily\scriptsize] (y) at (-0.5,\x-0.5) {\x};
      }
      \node (np) at (2.0,7.5) {NP-c.};
      \node (p) at (1.0,1.5) {$\in$ P};
    \end{tikzpicture}
    \caption{Overview of the complexity of \hhDV{\alpha}{\beta}. Gray cells correspond to NP-complete cases, whereas white cells are polynomial-time solvable cases.}
    \label{fig:abDVresults}
  \end{figure}
  
  \autoref{fig:abDVresults} depicts the (non-parameterized) computational
  complexity landscape for \hhDV{\alpha}{\beta} with respect
  to~$\alpha$ and~$\beta$, indicating the border of hardness.
  In the following, we will step by step develop the results exhibited
  in~\autoref{fig:abDVresults}.

  \subsection{NP-Completeness for Heterogeneous Data}\label{sec:binDVNPc}
  As a starting point, we prove the NP-completeness of the
  case~$\alpha=2$, $\beta=4$.

  \begin{theorem}
    \label{thm:24dv_NP-hard}
    \hhDV{2}{4} is NP-complete.
  \end{theorem}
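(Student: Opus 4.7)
Membership in NP is immediate: given a candidate column set $K$ with $|K|\le k$, we verify in polynomial time (for instance by sorting) that the rows of $\Scol{K}$ are pairwise distinct.

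For NP-hardness I would reduce from \textsc{Vertex Cover}. The right perspective is that \DV{} is equivalent to a \HS{} problem on the family $\mathcal{D}=\{D_{ij}:i\neq j\}$ of pairwise row-difference sets, and for \hhDV{2}{4} every such set has size in $\{2,3,4\}$. I would thus aim to realise the edges of the input graph $G=(V,E)$ as the size-$2$ sets of $\mathcal{D}$. A natural construction introduces one \emph{vertex column} per $v\in V$ together with one \emph{tag column} $c_e$ per edge $e\in E$, and for each edge $e=\{v_i,v_j\}$ creates two weight-$2$ rows $a_e,b_e$ that both carry a $1$ in tag column $c_e$ and differ exactly in vertex columns $i$ and $j$. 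Then $D_{a_e,b_e}=\{v_i,v_j\}$ encodes the edge directly, and a short case analysis shows that every other pair of rows has Hamming distance exactly $2$ or $4$ (both row supports have size $2$ and intersect in at most one element, since the tag columns are pairwise distinct), so the produced instance indeed satisfies $h=2$ and $H=4$.

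The forward direction of correctness is routine: a vertex cover of $G$, extended by a graph-determined set of tag columns, should yield a valid~$K$. The reverse direction is where I expect the real difficulty to lie, because cross-pairs of rows from different edges produce additional $D$-sets that must also be hit: whenever two such rows happen to agree on their vertex-column entry (e.g.\ $a_e$ and $a_{e'}$ when both single out the same vertex), the corresponding $D$-set consists of only the two involved tag columns, and hitting it forces a tag column into $K$. I would handle these forced contributions by an exchange argument combined with a careful orientation of $E$, so that the number of forced tag columns is a function of $G$ alone and can be absorbed into a fixed offset of the target parameter $k'=k+\tau$. Getting this bookkeeping tight enough that the reduction preserves the parameter exactly is the main obstacle I anticipate.

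Should the direct vertex-cover reduction prove too delicate, a cleaner fallback is to reduce from \textsc{$3$-Hitting Set} with sets of size at most $3$ (which is NP-hard): using an analogous but more flexible gadget (for instance, three rows per set with one pair differing on the set itself and the other pairs kept at Hamming distance $2$ or $4$ via shared tags), each input set is realised as a size-$3$ $D$-set and the auxiliary cross-pair $D$-sets can be arranged to be automatically satisfied by the same hitting set. This avoids the orientation bookkeeping needed in the \textsc{Vertex Cover} variant at the price of a slightly larger gadget.
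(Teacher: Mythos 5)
Your NP-membership argument is fine, but the hardness reduction has a genuine gap, and it is exactly at the point you flag as "the main obstacle." In your construction every edge $e=\{v_i,v_j\}$ contributes \emph{both} rows $a_e$ (support $\{c_e,v_i\}$) and $b_e$ (support $\{c_e,v_j\}$), so there is no orientation freedom to exploit, and the cross-pair difference sets impose constraints that are \emph{not} a function of $G$ alone. Concretely, writing $K_V$ for the selected vertex columns and $F$ for the set of edges whose tag column is omitted: (i) for two edges $e,e'$ sharing a vertex $v$, the two rows assigned to $v$ have difference set $\{c_e,c_{e'}\}$, so $F$ must be a matching; and (ii) for two disjoint edges $e=\{u_1,u_2\},\,e'=\{w_1,w_2\}$ both in $F$, the four cross-pairs have difference sets $\{c_e,c_{e'},u_a,w_b\}$, forcing $K_V$ to cover the complete bipartite graph between $\{u_1,u_2\}$ and $\{w_1,w_2\}$, i.e.\ $e\subseteq K_V$ or $e'\subseteq K_V$. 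The minimum total cost is therefore $|K_V|+m-|F|$ subject to constraints that couple $F$ to the particular vertex cover chosen (roughly, $|F|$ is bounded by one plus the maximum matching inside $G[K_V]$), so it cannot be absorbed into a fixed offset $k'=k+\tau$ with $\tau$ depending only on $G$; the reduction as stated does not establish equivalence with \textsc{Vertex Cover}. The fallback via $3$-\textsc{Hitting Set} is only sketched and faces the same issue of controlling the auxiliary cross-pair sets.

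The paper sidesteps this coupling by choosing a different source problem. It takes the incidence matrix of $G$ (rows are weight-$2$ edge vectors) plus a null row, with no tag columns at all, and reduces from \DIS{}: a column set $K$ works if and only if every edge meets $K$ \emph{and} no two edges intersect $K$ in the same single vertex, which is precisely the statement that $V\setminus K$ is pairwise at distance at least $3$. In other words, the extra constraints that break your \textsc{Vertex Cover} reduction are folded into the definition of the source problem rather than fought with bookkeeping. If you want to rescue your approach, the natural move is the paper's: strengthen the source problem (to \DIS{}, which is NP-complete via \IM{}) so that the "forced" interactions become part of what is being decided, rather than an offset to be computed.
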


\begin{proof}
    It is easy to check that \DV{} is in \NP.
    To prove \NP-hardness, we give a polynomial-time many-one
    reduction from a special variant of the
    \IS{} problem in graphs, which is defined as follows.

    \problemdef{\DIS}
    {An undirected graph $G=(V,E)$ and $k \in \mathbb{N}$.}
    {Is there a subset of vertices $I\subseteq V$ of size at least $k$ such that each pair of vertices from
      $I$ has distance at least three?}

    \noindent Here, the distance of two vertices is the number of
    edges contained in a shortest path between them.
    \DIS{} is known to be \NP-complete by a reduction from the
    \NP-complete \IM{} problem~\citep{brandstaedt11}.
  
    Our reduction works as follows: Let $(G=(V,E),k)$ with $|V|=n$ and
    $|E|=m$ be an instance of \DIS{} and let~$Z\in\{0,1\}^{m\times n}$ be the
    incidence matrix of~$G$ with rows corresponding to edges and
    columns to vertices, that is, $z_{ij}=1$~means that the $i$-th edge contains the~$j$-th vertex.
    We assume that~$G$ contains no isolated vertices since they are
    always contained in a maximum distance-3 independent set and can
    thus be removed. We further assume that~$G$ contains at least four
    edges of which at least two are disjoint.
    Otherwise, $G$ is either of constant size or a star, for which a
    maximum distance-3 independent set consists of only a single
    vertex. Hence, we can solve these cases in polynomial time and
    return a trivial yes- or no-instance.
    
    The matrix~$S\in\{0,1\}^{(m+1)\times n}$ of the \hhDV{2}{4} instance~$(S,k')$ is defined as follows:
    $\row{s}{i} := \row{z}{i}$ for all~$i\in[m]$ and $\row{s}{m+1}:=\mbf{0}$.
    The desired solution size is set to $k':=n-k$.
    Notice that each row in~$Z$ contains exactly two~1's and no two
    rows are equal since~$G$ contains no multiple edges.
    Moreover, by assumption, there exists a pair of rows with Hamming
    distance four since~$G$~contains a pair of disjoint edges.
    Since~$S$ contains the null vector as a row, it follows
    that~$h=2$ and~$H=4$.
    The instance~$(S,k')$ can be computed in~$O(nm)$ time.

    The correctness of the reduction is due to the following argument.
    The instance~$(G,k)$ is a yes-instance if and only if there is a set~$I \subseteq V$ of size exactly~$k$ such that every edge in~$G$ has at least one endpoint
    in~$V\setminus I$ and no vertex in~$V\setminus I$ has two
    neighbors in~$I$. 
    In other words, the latter condition says that no two edges with
    an endpoint in~$I$ share the same endpoint in~$V\setminus I$. 
    Equivalently, for the subset~$K$ of columns corresponding to the
    vertices in~$V\setminus I$, it holds that all rows
    in~$S[[m],K]$ contain at least one 1 and no two rows
    contain only a single~1 in the same column. 
    This is true if and only if~$K$~is a solution for~$(S,k')$ because
    $\row{s}{m+1}$ equals the null vector and thus two rows in~$\Scol{K}$
    can only be identical if either they consist of 0's only or
    contain only a single 1 in the same column. Furthermore, $|K| = |V \setminus I| = n - k = k'$.
  \end{proof}

  \noindent We remark that from a \W{1}-hardness result for
  \IM{} parameterized by the number of vertices in the induced
  subgraph~\citep{moser09},
  we can infer \W{1}-hardness for \DIS{} with respect to the solution size~$k$.
  Since the proof of \autoref{thm:24dv_NP-hard} actually provides a
  parameterized reduction from \DIS{} parameterized by~$k$ to \DV{}
  parameterized by the number of columns to discard (which
  is~$d-k'=n-(n-k)=k$ in the above reduction), we have the following:
  \begin{corollary}
    \label{cor:distinct_vectors_w1-hard_t}
    \hhDV{2}{4} is \W{1}-hard with respect to the
    number~$t:=d-k$ of discarded columns.
  \end{corollary}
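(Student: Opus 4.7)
The plan is to chain two parameterized reductions. The first establishes \W{1}-hardness of \DIS{} parameterized by the solution size~$k$, starting from the \W{1}-hardness of \IM{} parameterized by the number of vertices in the induced matching~\citep{moser09}. The second is the reduction already constructed in the proof of \autoref{thm:24dv_NP-hard}, which I claim is in fact a parameterized reduction from \DIS{}/$k$ to \hhDV{2}{4} parameterized by~$t$.

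For the first reduction I would use the line-graph transformation: given a graph~$G$, let~$L(G)$ have one vertex per edge of~$G$, with two such vertices adjacent iff the corresponding edges of~$G$ share an endpoint. The key observation to verify is that two edges~$e_1,e_2$ of~$G$ are at distance at least three in~$L(G)$ precisely when they are disjoint in~$G$ and no further edge of~$G$ joins an endpoint of~$e_1$ to an endpoint of~$e_2$; equivalently, $\{e_1,e_2\}$ forms an induced matching in~$G$. Extending this pairwise condition to $k$-element sets, a distance-$3$ independent set of size~$k$ in~$L(G)$ corresponds exactly to an induced matching with~$k$ edges (equivalently, $2k$ vertices) in~$G$. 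Since the line graph can be constructed in polynomial time and the parameter is preserved (up to a factor of~$2$), the reduction transfers \W{1}-hardness from \IM{} to \DIS{}.

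For the second reduction I would inspect the construction in the proof of \autoref{thm:24dv_NP-hard} and track its parameters: from a \DIS{} instance $(G,k)$ with $|V|=n$ it produces an \hhDV{2}{4} instance $(S,k')$ whose matrix has $d=n$ columns and whose target retained-column count is~$k'=n-k$. Consequently the number of discarded columns is
\[
  t = d - k' = n - (n-k) = k,
\]
so~$t$ is a function of the original parameter~$k$ alone. As the construction runs in polynomial time (in particular in \FPT{} time), it is already a valid parameterized reduction from \DIS{}/$k$ to \hhDV{2}{4}/$t$, and composing it with the first reduction yields the claimed \W{1}-hardness.

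The only real obstacle is to check the line-graph characterisation of distance-$3$ pairs carefully in both directions, making sure that no adjacent edge can shorten the distance below three; this is a routine case distinction. Everything else amounts to parameter bookkeeping, and the crucial point is simply that the \NP-hardness reduction of \autoref{thm:24dv_NP-hard} happens to match~$t$ to~$k$ exactly.
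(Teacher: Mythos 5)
Your proposal is correct and follows the same route as the paper: the paper likewise derives \W{1}-hardness of \DIS{} from the \W{1}-hardness of \IM{} and then observes that the reduction in the proof of \autoref{thm:24dv_NP-hard} is a parameterized reduction because the number of discarded columns is $t=d-k'=n-(n-k)=k$. You merely spell out the line-graph step that the paper leaves implicit.
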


  \noindent Note, however, that the reduction in the proof of \autoref{thm:24dv_NP-hard}
  is not a parameterized reduction with respect to the number~$k'=n-k$ of retained columns since~$k'$ does not solely depend on~$k$ but also on the number~$n$ of vertices.
  Hence, we cannot infer \W{1}-hardness with respect to~$k'$.
  In fact, we will show in \autoref{sec:DVgeneral} that \DV{} allows a problem kernel with respect to the number of retained columns for binary alphabets.
  
  We will now give polynomial-time reductions from \hhDV{2}{4} to
  certain other cases of \hhDV{\alpha}{\beta} with different bounds on the minimum
  and maximum Hamming distance.
  Using \autoref{thm:24dv_NP-hard} as an anchor point, we can then derive all remaining
  NP-completeness results in \autoref{fig:abDVresults}.
  The reductions will mainly build on some padding arguments, that is,
  starting from a given input matrix, we expand it by adding
  new columns and rows such that we achieve the desired
  constraints on the Hamming distances without changing the actual
  answer to the original instance.
  To start with, we define a type of column vectors which can
  be used for padding an input matrix without changing the answer
  to the original instance, that is, such ``padding columns'' are not contained in an
  optimal solution. Informally, a column~$j$ is
  \emph{inessential} if all rows could still be distinguished by the same
  number of columns without selecting~$j$.
  The formal definition is the following.

  \begin{definition}
    \label{def:redundant}
    For a matrix $S\in\Sigma^{n\times d}$, a column~$j\in [d]$ is
    called \emph{inessential} if the following two conditions are
    fulfilled:
    \begin{enumerate}[(1)]
      \item\label{cond1} There exists a row $i\in[n]$ such that column~$j$
        \emph{exactly distinguishes} row~$i$ from all other rows, that is, $s_{ij}\neq s_{lj}$
        and~$s_{lj} = s_{l'j}$ holds for
        all~$l,l'\in[n]\setminus\{i\}$.
      \item\label{cond2} All rows in $\Scol{[d]\setminus\{j\}}$ are still distinct.
    \end{enumerate}
  \end{definition}
  Note that for binary matrices, Condition~(\ref{cond1}) of
  \autoref{def:redundant} can only be fulfilled by column vectors
  that contain either a single~1 or a single~0, that is,
  the column vectors of weight~1 or~$n-1$.

  Next, we show that, for any inessential column
  in a given input matrix, we can assume that this column
  is not contained in a solution for the \DV{} instance.
  
\begin{lemma}
  \label{lem:padding}
  Let~$(S\in\{0,1\}^{n\times d},k)$ be a \DV{} instance
  with an inessential column~$j\in[d]$.
  It holds that~$(S,k)$ is a yes-instance if
  and only if~$(\Scol{[d]\setminus\{j\}},k)$ is a yes-instance.
\end{lemma}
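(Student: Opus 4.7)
The backward direction is immediate: any solution $K\subseteq[d]\setminus\{j\}$ for the reduced instance also distinguishes all rows of $S$, and its size is unchanged. So I would spend the proof on the forward direction.

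Let $K\subseteq[d]$ with $|K|\le k$ be a solution for $(S,k)$. If $j\notin K$, then $K\subseteq[d]\setminus\{j\}$ is already a solution for the reduced instance, and we are done. Otherwise, consider $K':=K\setminus\{j\}$. My plan is to argue that $K'$ is either itself a solution or can be extended by one column from $[d]\setminus\{j\}$ to a solution of size at most $k$.

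Let $i$ be the row that column $j$ exactly distinguishes, according to Condition~(\ref{cond1}) of \autoref{def:redundant}. Define $L:=\{l\in[n]\setminus\{i\} : \row{s}{i}|_{K'} = \row{s}{l}|_{K'}\}$, the set of rows that become indistinguishable from~$i$ after removing column~$j$ from~$K$. The key observation is that $|L|\le 1$. Indeed, for any two rows $l_1,l_2\in L$ we have $\row{s}{l_1}|_{K'} = \row{s}{i}|_{K'} = \row{s}{l_2}|_{K'}$, and on column~$j$ they agree as well since $s_{l_1 j}=s_{l_2 j}$ by Condition~(\ref{cond1}). Thus $\row{s}{l_1}|_K = \row{s}{l_2}|_K$, contradicting that $K$ is a solution unless $l_1=l_2$. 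Similarly, any pair of rows in $\Scol{K}$ not involving~$i$ is distinguished by some column in $K'$, since column~$j$ only separates~$i$ from the rest.

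If $L=\emptyset$, then $K'$ itself is a solution of size at most $k-1\le k$. If $L=\{l^*\}$, then by Condition~(\ref{cond2}) of \autoref{def:redundant}, all rows in $\Scol{[d]\setminus\{j\}}$ are pairwise distinct, so there exists some $j'\in[d]\setminus\{j\}$ with $s_{ij'}\neq s_{l^* j'}$. Setting $K'':=K'\cup\{j'\}$ yields $|K''|\le k$, and $K''$ distinguishes all row pairs: those already separated by $K'$ remain so, and the pair $(i,l^*)$ is now separated by~$j'$. Hence $K''$ is a solution for the reduced instance. The only subtle step is the bound $|L|\le 1$, which crucially uses the ``exactness'' in Condition~(\ref{cond1}); the remainder is bookkeeping.
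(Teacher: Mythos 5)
Your proposal is correct and follows essentially the same route as the paper's proof: strip column~$j$ from the solution, observe that all pairs not involving row~$i$ stay separated and that at most one row can collide with row~$i$ (your set $L$ with $|L|\le 1$ is exactly the paper's ``row~$i$ equals at most one other row~$l$''), and then use Condition~(\ref{cond2}) to find a single replacement column~$j'$. No gaps; the only cosmetic difference is that you derive $|L|\le 1$ by a direct collision argument on $K$ rather than by noting that the non-$i$ rows remain pairwise distinct under $K'$.
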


\begin{proof}
  It is clear that the ``if'' part of the statement holds; let us consider
  the ``only if'' part. To this end, assume that there is a
  solution set~$K\subseteq [d]$ of columns for~$(S,k)$ with~$j\in K$.
  Since column~$j$ exactly distinguishes row~$i$ from all
  other rows and no other pair of rows, it follows
  that~$K':=K\setminus \{j\}$ is a solution for~$(S[[n]\setminus\{i\},[d]\setminus\{j\}],k-1)$.
  But then, there also exists a solution~$K''\subseteq [d]\setminus\{j\}$
  for~$(S[[n],[d]\setminus\{j\}],k)$. This is true because
  row~$i$ equals at most one other row~$l$ in~$S[[n],K']$ since all
  rows in~$S[[n]\setminus\{i\},K']$ are distinct.
  Row~$i$ can thus be distinguished from row~$l$ by a column~$j' \in [d]\setminus\{j\}$
  with~$s_{ij'}\neq s_{lj'}$, which exists because column~$j$ is
  inessential, and thus, by definition, all rows in~$S[[n],[d]\setminus\{j\}]$ are distinct.
  Hence, $K'' := K' \cup \{j'\}$ is a solution for~$(S,k)$.
\end{proof}

Note that, due to \autoref{lem:padding}, adding inessential columns
to a given input matrix yields an equivalent \DV{} instance.
Hence, for the binary case, any construction that only adds column vectors
which either contain a single~1 or a single~0 to the
input matrix yields an equivalent instance since these columns are
clearly inessential.
Following this basic idea, the proof of the following theorem
shows which Hamming distances can be generated from
a given input matrix by adding inessential columns.

\begin{theorem}
    \label{thm:abDV_NP-hard}
    \hhDV{\alpha}{\beta} is NP-complete for all
    \begin{itemize}
      \item $\beta \ge \alpha + 2$ if $\alpha$ is even, and
      \item $\beta \ge \alpha + 3$ if $\alpha$ is odd.
    \end{itemize}
  \end{theorem}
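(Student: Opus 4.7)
The plan is to prove NP-hardness by a chain of polynomial-time many-one reductions rooted at \hhDV{2}{4}, which is NP-complete by \autoref{thm:24dv_NP-hard}; NP-membership is immediate. The engineering tool is the \emph{single-$1$ column}, i.e., a column with value~$1$ in exactly one row and~$0$ in all other rows. By \autoref{def:redundant}, every such column is inessential in any matrix whose rows are pairwise distinct, so appending any number of them preserves the answer by iterated application of \autoref{lem:padding}. The effect on Hamming distances is transparent: appending single-$1$ columns for every row in a set~$T\subseteq[R]$ increases $\Ham(\row{s}{i},\row{s}{j})$ by exactly $|T\cap\{i,j\}|$. Two primitives emerge: a \emph{uniform $+2$-shift} (take~$T=[R]$, so every pair gains~$2$), and a \emph{row-specific $+1$-shift} (take~$T=\{i^*\}$, so only pairs involving~$i^*$ gain~$1$).

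For even~$\alpha\ge 2$ and $\beta\ge\alpha+2$, I take the \hhDV{2}{4}-instance~$(S,k)$ produced by \autoref{thm:24dv_NP-hard}, apply the uniform $+2$-shift $(\alpha-2)/2$ times to reach an \hhDV{\alpha}{\alpha+2}-instance, and then apply the row-specific $+1$-shift $\beta-\alpha-2$ times at a row~$i^*$ that lies on a current maximum-distance pair. This stretches the maximum to~$\beta$ while the minimum~$\alpha$ is still witnessed by any pair avoiding~$i^*$. Using the structural assumptions imported from the proof of \autoref{thm:24dv_NP-hard} (the graph has at least four edges and contains two disjoint edges~$e^*,e^{**}$), one may choose $i^*=e^*$; then~$(e^{**},\mbf{0})$ witnesses $h=\alpha$ without involving~$i^*$, while $(e^*,e^{**})$ witnesses $H=\beta$.

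For odd~$\alpha\ge 1$ and $\beta\ge\alpha+3$, I first produce an \hhDV{1}{4}-instance by appending to~$(S,k)$ a new column~$c$ that is~$0$ in every existing row, together with a new row~$v$ that agrees with the all-zero row~$\mbf{0}$ of~$S$ on the original columns and has a~$1$ in column~$c$. Original pairwise distances are unchanged, $\Ham(v,\mbf{0})=1$, and every other distance to~$v$ equals the corresponding distance to~$\mbf{0}$ plus~$1$, namely~$3$. Hence the new matrix satisfies $h=1$ and $H=4$. Column~$c$ is not inessential (removing it identifies~$v$ with~$\mbf{0}$), but it is \emph{forced} into every feasible solution because it is the only column distinguishing~$v$ from~$\mbf{0}$; once~$c$ is selected, every pair involving~$v$ is resolved. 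A direct argument therefore yields that the resulting \hhDV{1}{4}-instance~$(S',k+1)$ is a yes-instance if and only if $(S,k)$ is. From this base, $(\alpha-1)/2$ uniform $+2$-shifts produce an \hhDV{\alpha}{\alpha+3}-instance, after which $\beta-\alpha-3$ row-specific $+1$-shifts at~$i^*=e^*$ stretch the maximum to~$\beta$; the pair~$(v,\mbf{0})$, untouched by~$i^*$, continues to witness~$h=\alpha$.

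The main technical obstacle is the bookkeeping on extremal pairs: each row-specific shift achieves the intended~$H$ only if~$i^*$ currently lies on some maximum-distance pair, and preserves~$h$ only if some minimum-distance pair avoids~$i^*$. Both conditions are maintained throughout the reductions because the uniform $+2$-shift acts identically on every pair, and the row-specific shift centered on the edge row~$i^*=e^*$ leaves the crucial extremal pairs ($(v,\mbf{0})$ in the odd case and $(e^{**},\mbf{0})$ in the even case) intact. Trivial graphs lacking the required structural properties are disposed of by returning trivial yes- or no-instances, as already done in the proof of \autoref{thm:24dv_NP-hard}. Altogether, each $(\alpha,\beta)$ satisfying the theorem's hypotheses is shown to be NP-hard via a polynomial-time many-one reduction from \hhDV{2}{4}, which completes the proof.
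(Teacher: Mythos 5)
Your proof is correct and follows essentially the same strategy as the paper: padding the \hhDV{2}{4} instance with inessential weight-one columns (justified via \autoref{lem:padding}) and, for the odd-$\alpha$ case, first passing through an \hhDV{1}{4} instance obtained by appending a new row together with a forced column; the only organizational differences are that your uniform $+2$-shift places a weight-one column at the null row where the paper instead uses weight-$(n-1)$ columns, and that you route every odd~$\alpha$ through the extra-row construction rather than only~$\alpha=1$. One small caveat: the blanket claim that a single-$1$ column is inessential in \emph{any} matrix with pairwise distinct rows is false as stated (your own column~$c$ is a counterexample, since Condition~(2) of \autoref{def:redundant} can fail), but in every place you actually invoke it the base columns already distinguish all rows, so the argument goes through.
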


\begin{proof}
  In the following, we give polynomial-time many-one reductions from
  \hhDV{2}{4}.
  To this end, let~$(S\in\{0,1\}^{n\times d},k)$ be the
  \hhDV{2}{4} instance as constructed in the proof of
  \autoref{thm:24dv_NP-hard}. Recall that this matrix~$S$ contains the null row vector,
  say~$\row{s}{n}=\mbf{0}$, and all other rows have
  weight two, $w(\row{s}{i})=2$ for all~$i \in[n-1]$.
  Moreover, there exists a pair of rows with Hamming distance four.
  Assume, without loss of generality, that the first two
  rows~$\row{s}{1}$ and~$\row{s}{2}$ have Hamming
  distance~$\Ham(\row{s}{1},\row{s}{2})=4$.
  \autoref{fig:subS} depicts an example of such a matrix.
  In the following, let~$D_j = D_{\col{s}{j}} \subseteq [n]$ denote the
  set of row indices where column vector~$\col{s}{j}$ equals 1.
  Further, for~$i\in\N$ and $I\subseteq [i]$, let~$\mbf{1}_I^i \in \{0,1\}^i$
  denote the size-$i$ vector that has 1-entries at all indices in~$I$
  and 0-entries elsewhere.

  \begin{figure}[t]
    \centering
    \begin{subfigure}{.25\linewidth}
      \centering
      \begin{tikzpicture}[scale=0.5]
        \draw[help lines] (0,0) grid (4,5);
        \node at (0.5,4.5) {1};
        \node at (1.5,4.5) {1};
        \node at (2.5,3.5) {1};
        \node at (3.5,3.5) {1};
        \node at (0.5,2.5) {1};
        \node at (3.5,2.5) {1};
        \node at (0.5,1.5) {1};
        \node at (2.5,1.5) {1};
      \end{tikzpicture}
      \caption{Original matrix.}
      \label{fig:subS}
    \end{subfigure}
    \begin{subfigure}{.33\linewidth}
      \centering
      \begin{tikzpicture}[scale=0.5]
        \draw[help lines] (0,0) grid (8,6);
        \draw[thick] (0,1) rectangle (4,6);
        \node at (0.5,5.5) {1};
        \node at (1.5,5.5) {1};
        \node at (2.5,4.5) {1};
        \node at (3.5,4.5) {1};
        \node at (0.5,3.5) {1};
        \node at (3.5,3.5) {1};
        \node at (0.5,2.5) {1};
        \node at (2.5,2.5) {1};
        \foreach \x in {4,5,6} {
          \node at (\x+0.5,5.5) {1};
        }
        \node at (7.5,0.5) {1};
      \end{tikzpicture}
      \caption{Case 1}
      \label{fig:sub_i}
    \end{subfigure}\\[1em]
    \begin{subfigure}{.75\linewidth}
      \centering
      \begin{tikzpicture}[scale=0.5]
        \draw[help lines] (0,0) grid (16,5);
        \draw[thick] (0,0) rectangle (4,5);
        \node at (0.5,4.5) {1};
        \node at (1.5,4.5) {1};
        \node at (2.5,3.5) {1};
        \node at (3.5,3.5) {1};
        \node at (0.5,2.5) {1};
        \node at (3.5,2.5) {1};
        \node at (0.5,1.5) {1};
        \node at (2.5,1.5) {1};
        \foreach \y in {1,...,4} {
          \node at (2.5 + 2*\y, 5.5-\y) {1};
          \node at (3.5 + 2*\y, 5.5-\y) {1};
        }
        \foreach \y in {1,...,4} {
          \node at (12.5, 5.5-\y) {1};
        }
        \foreach \x in {13,14,15} {
          \node at (\x+0.5,4.5) {1};
        }
      \end{tikzpicture}
      \caption{Case 2}
      \label{fig:sub_ii}
    \end{subfigure}
    \caption{Example of the construction for $\alpha=3$, $\beta=3$.}
    \label{fig:padding}
  \end{figure}

  We prove the theorem in two steps: First, the case~$\alpha=1$,
  $\beta=4+b$ for some~$b\ge 0$,
  and second, the case~$\alpha=2+a$, $\beta=4+2\lceil a/2\rceil+b$
  for some~$a,b\ge 0$.
  Note that these two cases together yield the statement of the theorem.

  \textbf{Case 1} ($\alpha=1$, $\beta=4+b$, $b\ge 0$).
  We define the instance $(S',k')$ as follows:
  The column vectors of the matrix~$S'\in\{0,1\}^{(n+1)\times
    (d+b+1)}$ are
  \[\col{s'}{j}:=\begin{cases}
    \mbf{1}_{D_j}^{n+1}, &j\in\{1,\ldots,d\},\\
    \mbf{1}_{\{1\}}^{n+1}, &j\in\{d+1,\ldots,d+b\},\\
    \mbf{1}_{\{n+1\}}^{n+1}, &j=d+b+1.
  \end{cases}\]
  We set~$k':=k+1$.
  An example of the constructed instance is shown in
  \autoref{fig:sub_i}.
  It is not hard to check that the rows of~$S'$ indeed fulfill the
  constraints on the Hamming distances:
  \begin{align*}
    h'&:=\min_{i\neq j\in [n+1]}\Ham(\row{s}{i},\row{s}{j})=\Ham(\row{s'}{n},\row{s'}{n+1})=1,\\
    H'&:=\max_{i\neq j\in [n+1]}\Ham(\row{s}{i},\row{s}{j})=\Ham(\row{s'}{1},\row{s'}{2})=\Ham(\row{s}{1},\row{s}{2})+b=4+b.
  \end{align*}
  As regards correctness, observe first that any solution
  contains the column index~$d+b+1$ because the row vectors~$\row{s'}{n}$ and
  $\row{s'}{n+1}$ only differ in this column.
  Since this column also distinguishes row~$\row{s}{n+1}$ from all
  other rows and no other pair of rows in~$S'$, it follows that
  $(S',k')$ is a yes-instance if and only if~$(\Scol[S']{[d+b]},k)$
  is a yes-instance. Due to \autoref{lem:padding}, this is the case if
  and only if~$(S,k)$ is a yes-instance.

  \textbf{Case 2} ($\alpha=2+a$, $\beta=4+2\lceil a/2\rceil+b$, $a,b\ge 0$).
  We define the instance $(S',k')$ as follows:
  Starting with~$S':=S$, we add~$\lceil a/2\rceil$ copies of the column
  vector~$\mbf{1}_{\{i\}}^n$ for each~$i\in[n-1]$ to~$S'$.
  Moreover, we add~$\lfloor a/2\rfloor$ copies of the column
  vector~$\mbf{1}_{[n-1]}^n$ to~$S'$.
  Finally, we add~$b$ copies of the column vector~$\mbf{1}_{\{1\}}^n$ to~$S'$
  and set~$k'=k$.
  \autoref{fig:sub_ii} shows an example of the construction.
  Indeed, we have the following Hamming distances:
  \begin{align*}
    \Ham(\row{s'}{n},\row{s'}{1})&=2+a+b,\\
    \Ham(\row{s'}{n},\row{s'}{j})&=2+a,\\
    \Ham(\row{s'}{1},\row{s'}{j})&=\Ham(\row{s}{1},\row{s}{j})+2\lceil
    a/2\rceil+b,\\
    \Ham(\row{s'}{j},\row{s'}{j'})&=\Ham(\row{s}{j},\row{s}{j'})+2\lceil
    a/2\rceil,
  \end{align*}
  for all~$j, j'\in\{2,\ldots,n-1\}$, $j \neq j'$.
  Thus, it holds $h'=2+a$ and $H'=4 + 2\lceil a/2\rceil+ b$.
  Since all row vectors in~$S$ are distinct and since we only added
  columns which distinguish exactly one row from all others, the correctness
  follows due to \autoref{lem:padding}.
\end{proof}

  \autoref{thm:abDV_NP-hard} yields the NP-completeness of
  \hhDV{\alpha}{\beta} for all~$\beta \ge \alpha+2$ (for
  even~$\alpha$) and $\beta\ge \alpha+3$ (for odd~$\alpha$),
  that is, for a given instance with fixed minimum pairwise row
  Hamming distance~$\alpha$,
  it is possible to increase the maximum pairwise row Hamming
  distance~$\beta$ arbitrarily without changing the answer to the instance.
  On the contrary, however, it seems impossible to construct an equivalent instance
  where only the minimum pairwise row Hamming distance is increased.
  Indeed, in the following, we show polynomial-time solvability
  for the case~$\alpha \ge 2\lfloor\beta/2\rfloor-1$.

  \subsection{Polynomial-Time Solvability for Homogeneous Data}\label{sec:binDVpoly}
  The polynomial-time algorithm for homogeneous is based on the observation that, for small differences
  between the values
  of minimum and maximum pairwise row Hamming distance, the input matrix
  is either highly structured or bounded in size by a constant
  depending only on the maximum pairwise row Hamming distance.
  This structure in turn guarantees that the instance is easily
  solvable.
  Before proving the theorem, we start with some basic
  results.

  First, we show that there is a linear-time preprocessing
  of a given input matrix such that the resulting matrix contains
  the null vector as a row and no two column vectors are identical.


%
  \begin{lemma}
    \label{lem:preprocessing}
    For a given \DV{} instance~$I =(S\in\{0,1\}^{n\times d},k)$ one
    can compute in~$O(nd)$ time an equivalent \DV{} instance $I':=(S'\in\{0,1\}^{n\times d'},k)$
    such that~$S'$ contains the null vector~$\mathbf{0}\in\{0\}^{d'}$ as a row, the number~$d'$ of columns of~$S'$ is at most~$d$, and
    no two column vectors of~$S'$ are identical (implying~$d'\le 2^{n}$).
  \end{lemma}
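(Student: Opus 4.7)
The plan is to construct $S'$ from $S$ in two independent preprocessing passes, each of which can be implemented in $O(nd)$ time and each of which preserves the answer to the \DV{} instance. The two passes are: (i) complementing selected columns so that one prescribed row becomes the null vector, and (ii) deleting duplicate columns.

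For pass (i), I would fix an arbitrary row, say $\row{s}{1}$, and, for every column $j\in[d]$ with $s_{1j}=1$, replace the entire column~$\col{s}{j}$ by its bitwise complement $\mathbf{1}-\col{s}{j}$. This does not change the pairwise Hamming distances between any two rows, and a column~$j$ distinguishes a pair of rows in the new matrix if and only if it distinguished them in~$S$; hence a subset~$K\subseteq[d]$ is a solution for the new instance if and only if it is a solution for~$(S,k)$. After this pass, row~$1$ is the null vector. The cost of this step is clearly $O(nd)$.

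For pass (ii), I would remove duplicated column vectors, keeping one representative per duplicate class. Equivalence is the key step to argue: if $j,j'\in[d]$ satisfy $\col{s}{j}=\col{s}{j'}$, then both columns distinguish exactly the same pairs of rows, so any solution~$K$ for~$(S,k)$ can be converted into a solution of the same size in which $j'$ is replaced by $j$ whenever $j'\in K$ and $j\notin K$; conversely, a solution for the reduced matrix is trivially a solution for~$S$ because its columns are a subset of those of~$S$. For the running time, I would sort the $d$ column vectors, each of length $n$, by radix sort over the binary alphabet, which takes $O(nd)$ time, and then scan the sorted list once, removing any column equal to its predecessor. This yields a matrix~$S'$ with $d'\le d$ columns and, since its columns are pairwise distinct elements of~$\{0,1\}^n$, also $d'\le 2^n$.

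The two passes are independent (pass (i) does not create or destroy duplicates beyond those already present up to complementation, and pass (ii) does not alter row contents), so performing them in sequence yields an equivalent instance $I'$ satisfying all three required conditions within the overall $O(nd)$ time budget. I do not foresee any essential obstacle here; the only point that needs to be handled carefully is the equivalence argument in pass (ii), specifically verifying that replacing~$j'$ by~$j$ inside a candidate solution truly preserves distinctness in $S'_{|K}$, which follows immediately from $\col{s}{j}=\col{s}{j'}$.
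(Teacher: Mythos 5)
Your proposal is correct and follows essentially the same route as the paper's proof: complement the columns where a fixed row has a $1$ to make that row the null vector, then radix-sort the columns lexicographically and delete duplicates, with the same $O(nd)$ accounting. You merely spell out the equivalence arguments (invariance of which row pairs each column distinguishes) in a bit more detail than the paper, which states them as obvious.
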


  \begin{proof}
    From an instance~$I=(S,k)$, we compute~$S'$ as follows:
    First, in order to have the null vector~$\mathbf{0}$ as a row,
    we consider an arbitrary row vector, say~$\row{s}{1}$,
    and iterate over all columns~$j$.
    If~$s_{1j}=1$, then we exchange all~1's and~0's in column~$j$.
    Then, we sort the columns of~$S$ lexicographically in~$O(nd)$
    time (using radix sort). 
    We iterate over all columns again and check for any two successive
    column vectors whether they are identical and, if so, remove one of them.
    This ensures that all remaining column vectors are different,
    which implies that there are at most~$2^n$.
    Thus, in~$O(nd)$ time, we end up with a matrix~$S'$
    containing at most~$2^n$ columns, where~$\row{s'}{1}=\mbf{0}$.
    Clearly, reordering columns, removing identical columns, as well as
    exchanging 1's and 0's in a column does not change the
    answer to the original instance.
  \end{proof}

  We henceforth assume all
  input instances to be already preprocessed according to
  \autoref{lem:preprocessing}.
  In fact, we can extend \autoref{lem:preprocessing} by removing also
  inessential columns (recall \autoref{def:redundant}), that is, we can use the following data reduction rule.

  \begin{rrule}
    \label{rr:padding}
    Let $(S,k)$ be a \DV{} instance.
    If $S$ contains an inessential column, then delete this column
    from $S$.
  \end{rrule}
  \autoref{lem:padding} guarantees the correctness\footnote{A reduction rule is correct if it
    transforms yes-instances and only yes-instances into
    yes-instances.} of \autoref{rr:padding}. Exhaustive application of \autoref{rr:padding}
  can be done as follows: First, we determine in~$O(nd)$ time which
  columns fulfill Condition~(\ref{cond1}) of \autoref{def:redundant}.
  Recall that these are exactly the weight-1 and weight-$(n-1)$
  columns of which there can be at most~$\min\{n,d\}$ after the preprocessing
  according to \autoref{lem:preprocessing}.
  For each of these candidate columns~$j$, we check in~$O(nd)$ time
  whether Condition~(\ref{cond2}) also holds, that is, whether all row
  vectors are still distinct without column~$j$,
  by lexicographically sorting the rows of the matrix without
  column~$j$. The overall running time is thus in~$O(\min\{n,d\}\cdot nd)$.
  
  We now turn towards proving polynomial-time solvability of \hhDV{\alpha}{\beta} for~$\alpha\ge
  2\lfloor\beta/2\rfloor -1$.
  The proof uses some results from extremal
  combinatorics concerning certain set systems.
  We refer the reader to the book by \citet[Chapter~6]{jukna11} for an
  introduction into this topic.
  To start with, we introduce the necessary concepts and notation.
  Recall \autoref{def:weight}, where we defined the set~$W_{\row{s}{i}}$ of
  column indices where row~$i$ equals~1. In the following, for a given
  input matrix~$S$ and a given set of row indices~$I$, we will consider the
  \emph{column system} of~$I$, that is, the system containing the sets~$W_{\row{s}{i}}$ of
  column indices of all row vectors with indices in~$I$.
  
  \begin{definition}
    \label{def:colsys}
    For a matrix $S\in\{0,1\}^{n\times d}$ and a
    subset~$I\subseteq[n]$ of row indices,
    let~$\setsys{}(I) := \{W_{\row{s}{i}} \mid  i\in
    I\}$ denote the \emph{column system} of~$I$ containing the
    sets~$W_{\row{s}{i}}$ of column indices for all rows in~$I$.
    For~$\omega\in[d]$, let~$I_\omega:=\{i\in[n]\mid
    w(\row{s}{i})=\omega\}$ be the set of indices of the
    weight-$\omega$ rows and
    let~$\setsys{\omega}:=\setsys{}(I_\omega)$ be the column system
    of the weight-$\omega$ rows.
  \end{definition}
  \begin{figure}[t]
    \centering
    \begin{tikzpicture}[scale=.4]
        \draw[help lines] (0,0) grid (7,5);
        \foreach \x in {1,...,7} {
          \node at (\x-0.5,5.5) {\sffamily\scriptsize\x};
        }
        \node at (0.5,4.5) {1};
        \node at (0.5,3.5) {1};
        \node at (6.5,3.5) {1};
        \node at (1.5,1.5) {1};
        \node at (2.5,2.5) {1};
        \node at (4.5,1.5) {1};
        \node at (3.5,3.5) {1};
        \node at (3.5,1.5) {1};
        \node at (4.5,4.5) {1};
        \node at (5.5,0.5) {1};
        \node at (6.5,0.5) {1};

        \node[right] at (8,4.5) {$\setsys{1}=\{\{3\}\}$};
        \node[right] at (8,2.5) {$\setsys{2}=\{\{1,5\},\{6,7\}\}$};
        \node[right] at (8,0.5) {$\setsys{3}=\{\{1,4,7\},\{2,4,5\}\}$};
      \end{tikzpicture}
    \caption{An example of a binary matrix (left) containing rows of
      weight one, two, and three. The corresponding row systems are
      written on the right.}
    \label{fig:row-system-example}
  \end{figure}
  \autoref{fig:row-system-example} illustrates \autoref{def:colsys}.
  Note that in order to distinguish all rows of weight~$\omega$ from
  each other, we only have to consider those columns which
  appear in some of the sets contained in the column system~$\setsys{\omega}$
  since the weight-$\omega$ rows only differ in these columns.
  Thus, in order to find subsolutions for the weight-$\omega$ rows,
  the structure of~$\setsys{\omega}$, especially the pairwise
  intersections of the contained sets, will be very important for us.
  Therefore, we make use of two general combinatorial
  concepts of set systems~\cite{jukna11}, the first of which defines a system
  of sets that pairwise intersect in the same number of elements,
  whereas the second concept describes the even stronger condition that
  all pairwise intersections contain the same elements.

  \begin{definition}[Weak $\Delta$-system]
    A family $\mathcal{F} = \{S_1,\ldots,S_m\}$ of~$m$ different sets is called a
    \emph{weak}~$\Delta$-system if there is some~$\lambda\in\N$ such
    that~$|S_i \cap S_j| = \lambda$ for all~$i \neq j\in[m]$.
  \end{definition}

  \begin{definition}[Strong $\Delta$-system]
    A \emph{strong} $\Delta$-system (or \emph{sunflower}) is a
    weak~$\Delta$-system~$\{S_1,\ldots,S_m\}$ such that~$S_i\cap S_j = C$
    for all~$i\neq j\in[m]$ and some set~$C$ called the \emph{core}.
    The sets~$\pet{S}{i}:=S_i \setminus C$ are called \emph{petals}.
  \end{definition}

  As a first case, the following lemma illustrates the merit of the
  above definitions showing that any \DV{} instance can
  easily be solved if the underlying column system of all
  non-zero-weight rows forms a sunflower.
  
  \begin{lemma}
    \label{lem:sunflower-solution}
    Let~$I:=(S\in\{0,1\}^{n\times d},k)$ be a \DV{} instance such
    that~$\setsys{}:=\bigcup_{\omega \ge
      1}\setsys{\omega}$
    forms a sunflower (note that $\setsys{0}=\emptyset\not\in\setsys{}$).
    Then, $I$ is a yes-instance if and only if~$k\ge|\setsys{}|$.
    Moreover, any solution intersects at least all but one of the
    petals of~$\setsys{}$.
  \end{lemma}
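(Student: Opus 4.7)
The plan is to set $m := |\setsys{}|$ and let $C$ denote the core of the sunflower, with petals $\pet{S}{1}, \ldots, \pet{S}{m}$ corresponding to the $m$ non-zero-weight rows, which (after relabelling) I call $\row{s}{i_1}, \ldots, \row{s}{i_m}$, so that $W_{\row{s}{i_l}} = C \cup \pet{S}{l}$ for each $l$, the petals are pairwise disjoint from each other and from $C$, and (by the preprocessing assumption) $S$ contains the zero row.

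For the easy ``if'' direction I would pick one column $c_l \in \pet{S}{l}$ per petal and take $K := \{c_1, \ldots, c_m\}$. Since the petals are pairwise disjoint, $|K| = m \le k$. On row $\row{s}{i_l}$, column $c_l$ carries a $1$, while for every $l' \neq l$ the column $c_{l'} \in \pet{S}{l'}$ lies outside $W_{\row{s}{i_l}}$ and therefore carries a $0$. Hence the $m$ non-zero rows restrict to $m$ pairwise distinct indicator patterns on $K$, each also distinct from the zero row, so $K$ is a valid solution.

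For the ``only if'' direction and the petal claim, I would take an arbitrary solution $K$ and argue in two steps. First, for any two non-zero rows $\row{s}{i_l} \neq \row{s}{i_{l'}}$ the columns in which they differ are exactly $\pet{S}{l} \cup \pet{S}{l'}$ (the common core $C$ cancels out, and the petals are disjoint from $C$), so $K$ must include some column from $\pet{S}{l} \cup \pet{S}{l'}$. This immediately gives the petal claim: $K$ can miss at most one of the petals. Second, to distinguish $\row{s}{i_l}$ from the zero row, $K$ must include a column of $W_{\row{s}{i_l}} = C \cup \pet{S}{l}$. If $K$ meets every petal then $|K| \ge m$ directly; otherwise $K$ misses exactly one petal $\pet{S}{l_0}$, in which case distinguishing $\row{s}{i_{l_0}}$ from the zero row forces $K$ to contain at least one column of $C$, and together with at least one column per remaining petal this yields $|K| \ge (m-1)+1 = m$.

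The main thing to watch is the degenerate cases $m \in \{0,1\}$, where the core/petal decomposition is not canonical: $m = 0$ means only the zero row is present and $k = 0$ suffices trivially, while $m = 1$ means a single non-zero row which can be distinguished from the zero row by any column of $W_{\row{s}{i_1}}$. Beyond these trivia the proof is a clean unpacking of the sunflower definition, with disjointness of petals delivering the constructive upper bound and the symmetric-difference characterisation of distinguishing columns delivering the matching lower bound.
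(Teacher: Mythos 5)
Your lower-bound argument and the petal claim are sound and essentially match the paper's: the paper argues by contradiction on a solution of size less than $|\setsys{}|$ with a case distinction on whether it meets the core, and your direct count over the two cases ``$K$ meets every petal'' versus ``$K$ misses exactly one petal'' is the same reasoning in contrapositive form. The genuine gap is in the ``if'' direction. A sunflower of \emph{distinct} sets may have one empty petal, namely when $W_{\row{s}{i_l}}=C$ for some $l$, and this situation is explicitly within scope here --- the paper's own \autoref{fig:sunflower-instance} lists $\emptyset$ among the petals of its example. Your construction ``pick $c_l\in\pet{S}{l}$ for each $l$'' is then undefined for the empty petal, and no repair that draws columns only from the other petals can work: the row whose petal is empty agrees with the zero row on every column outside $C$, so it would remain indistinguishable from $\mbf{0}$. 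You carefully flag $m\in\{0,1\}$ as degenerate cases but miss this one, which is the only place where the core actually has to enter a solution.

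The fix is exactly what the paper does: take one column from each \emph{non-empty} petal and, if some petal is empty, additionally one arbitrary column from the core $C$. Since the sets in $\setsys{}$ are distinct, at most one petal can be empty, so this set still has size $|\setsys{}|$; the core column separates the core-only row from $\mbf{0}$ and from all other rows (which are separated among themselves and from $\mbf{0}$ by their petal columns, as in your verification). With that one-line repair your proof is complete and follows the same route as the paper's.
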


  \begin{proof}
    Recall that we assume the instance~$I$ to be already preprocessed according to \autoref{lem:preprocessing}. Hence, we can assume without loss of generality, that in~$S$, $\row{s}{n}=\mbf{0}$ and no two column vectors
    are equal; assume further that~$\setsys{}=\{W_{\row{s}{1}},\ldots,W_{\row{s}{n-1}}\}$ is
    a sunflower with core~$C$. An example is depicted in
    \autoref{fig:sunflower-instance}.

    Recall that any solution~$K$ fulfills~$K\cap D_{ij}\neq\emptyset$
    for all~$i\neq j\in[n]$, where~$D_{ij}$ is the set of column indices in which the row vectors~$\row{s}{i}$ and~$\row{s}{j}$ differ. Assume towards a contradiction that~$K\subseteq[d]$
    with~$|K|< n-1$ is a solution. If~$K\cap C=\emptyset$, then~$K$
    only intersects the petals. Since the petals are pairwise
    disjoint, it follows that there exists an~$i\in[n-1]$ such
    that~$K\cap W_i= K\cap D_{in}=\emptyset$, which shows that~$K$
    cannot be solution. If~$K\cap C \neq \emptyset$, then~$K$
    intersects at most~$n-3$ of the~$n-1$ petals in~$\setsys{}$.
    Hence, there exist~$i, j\in[n-1]$ with~$i\neq j$ such
    that~$K\cap (\pet{W}{i}\cup\pet{W}{j})=K\cap D_{ij}=\emptyset$.
    Hence, $K$ cannot be a solution.
    It remains to show that there is always a solution of
    size~$|\setsys{}|=n-1$. To this end, let~$K$ contain an arbitrary
    element from each non-empty petal and, if there is an empty petal,
    also an arbitrary element from the core~$C$. Clearly, $K$ is a
    solution of size~$n-1$.
  \end{proof}

  \begin{figure}[t]
    \centering
    \begin{tikzpicture}[scale=.5]
        \draw[help lines] (0,0) grid (10,7);
        \foreach \x in {1,...,10} {
          \node at (\x-0.5,7.5) {\sffamily\scriptsize\x};
        }
        \foreach \y in {0,...,5}{
          \node at (0.5,6.5-\y) {1};
          \node at (1.5,6.5-\y) {1};
        }
        \node at (2.5,4.5) {1};
        \node at (3.5,6.5) {1};
        \node at (6.5,5.5) {1};
        \node at (5.5,3.5) {1};
        \node at (7.5,3.5) {1};
        \node at (4.5,2.5) {1};
        \node at (8.5,6.5) {1};
        \node at (9.5,4.5) {1};
        \draw[thick] (1,0) rectangle (7,8);
      \end{tikzpicture}
    \caption{Example of a matrix where the set system~$\mathcal{D}$
      forms a sunflower with core~$C=\{1,2\}$ consisting of the first two
      columns. The six petals from top to bottom are~$\{4,9\}$, $\{7\}$, $\{3,10\}$,
      $\{6,8\}$, $\{5\}$ and~$\emptyset$.
      Framed by thick lines is a set of columns that distinguish all rows.}
    \label{fig:sunflower-instance}
  \end{figure}
  According to \autoref{lem:sunflower-solution}, identifying
  sunflower structures in a given input instance significantly
  simplifies our problem since they have easy solutions.
  To this end, the following result by \citet{deza74} will serve as
  an important tool since it describes conditions under which a
  weak~$\Delta$-system actually becomes a strong one, that is, a sunflower (see also \citet[Chapter~6, Theorem~6.2]{jukna11}).
  
  \begin{lemma}[{\citet[Theorem 2]{deza74}}]
    \label{lem:deza}
    Let~$\mathcal{F}$ be an $s$-uniform weak~$\Delta$-system, that is,
    each set contains~$s$ elements.
    If~$|\mathcal{F}| \geq s^2 - s + 2$, then~$\mathcal{F}$ is a sunflower.
  \end{lemma}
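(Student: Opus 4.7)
The plan is to prove the contrapositive: I would assume $\mathcal{F}$ is not a sunflower and deduce $|\mathcal{F}|\le s^2-s+1$, contradicting the hypothesis $|\mathcal{F}|\ge s^2-s+2$. The trivial case $\lambda=0$ is disposed of first: pairwise disjointness already makes $\mathcal{F}$ a sunflower with empty core. So throughout I assume $1\le\lambda\le s-1$.

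Next, I would fix an arbitrary $A\in\mathcal{F}$ and study the \emph{traces} $T_F:=F\cap A$ for $F\in\mathcal{F}\setminus\{A\}$; each trace is a $\lambda$-subset of $A$. The key local observation I would prove is: whenever two sets $F,F'$ share a trace $T$, the inclusion $T\subseteq F\cap F'$ together with $|F\cap F'|=\lambda=|T|$ forces $F\cap F'=T$ exactly. Hence all members of $\mathcal{F}\setminus\{A\}$ carrying the same trace $T$, together with $A$ itself, form a sunflower with core $T$. If only one trace class is nonempty, $\mathcal{F}$ is already a sunflower; otherwise at least two distinct traces appear, and I need to translate this into a bound on $|\mathcal{F}|$.

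For the quantitative step I would perform a double count on ordered triples $(F,F',y)$ with $F,F'\in\mathcal{F}\setminus\{A\}$ distinct and $y\in(F\cap F')\setminus A$. For two sets with distinct traces we have $(F\cap F')\cap A=T_F\cap T_{F'}$ of size at most $\lambda-1$, so each such pair contributes at least one triple; for sets sharing a trace the intersection is already absorbed by $A$. Combining an upper bound on how many cross-class pairs a fixed element $y\notin A$ can witness with the sunflower structure inside each trace class, the inequality should collapse to $m-1\le s(s-1)$.

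The hard part will be recovering the exact constant $s^2-s+1$ rather than some weaker polynomial bound. A crude pigeonhole over the $\binom{s}{\lambda}$ possible traces is useless, since $\binom{s}{\lambda}$ can dwarf $s$; the tightness must come from the rigid cross-class constraint $|F\cap F'|=\lambda$. If the direct double count refuses to yield the sharp constant, my fallback would be induction on $s-\lambda$: the averaging identity $\sum_{x\in A}d(x)=s+(m-1)\lambda$ produces an element $x^{*}\in A$ of large degree, and peeling $x^{*}$ from the sets containing it gives an $(s-1)$-uniform weak $\Delta$-system with intersection $\lambda-1$ to which the induction hypothesis applies. Matching the inductive threshold $(s-1)^{2}-(s-1)+2=s^{2}-3s+4$ against the degree lower bound produced by the averaging is the delicate accounting I expect to occupy the bulk of the proof.
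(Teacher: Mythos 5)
Note first that the paper offers no proof of this lemma: it is imported as is from Deza's 1974 paper (with a pointer to Chapter~6 of Jukna's book), so there is no in-paper argument to measure yours against. Judged on its own, your proposal correctly handles everything except the theorem itself. The case $\lambda=0$, the observation that two sets sharing a trace $T$ on a fixed $A\in\mathcal{F}$ must intersect in exactly $T$ (so each trace class together with $A$ is a sunflower), and the conclusion when only one trace class occurs are all fine. But the entire content of Deza's theorem is the quantitative step you explicitly defer: showing that a non-sunflower $s$-uniform weak $\Delta$-system has at most $s^2-s+1$ members. You propose two routes and complete neither, and each has a concrete obstruction. For the triple count you would need a sharp upper bound on the number of cross-class pairs witnessed by a fixed $y\notin A$, i.e.\ on $\sum_{y\notin A}\binom{d(y)}{2}$, and none is stated or proved; bounding the degrees $d(y)$ here is essentially a Fisher-type inequality, not a pigeonhole. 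Worse, the lines of a projective plane of order $s-1$ form a $1$-intersecting, $s$-uniform, non-sunflower family of size exactly $s^2-s+1$ in which every inequality in your scheme is tight, so any slack whatsoever in this step loses the theorem --- your worry about ``recovering the exact constant'' is not a technicality but the whole difficulty. For the fallback induction, the averaging identity only yields an element of degree roughly $1+\lambda(m-1)/s\approx\lambda s$, which falls far below the inductive threshold $(s-1)^2-(s-1)+2$ unless $\lambda$ is within a constant of $s$; and even where the star at $x^*$ is forced to be a sunflower, the sets avoiding $x^*$ remain unaccounted for.

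As it stands, then, this is a plan rather than a proof. The classical argument does proceed by induction on $\lambda$, but it is anchored differently: if some element lies in \emph{all} members of $\mathcal{F}$ one peels it off (the threshold $s^2-s+2$ only becomes easier to meet for the resulting $(s-1)$-uniform, $(\lambda-1)$-intersecting family), and in the remaining case --- no universal element --- one proves $m\le s^2-s+1$ directly by fixing $A\in\mathcal{F}$, using the identity $\lambda(m-1)=\sum_{x\in A}|\{F\in\mathcal{F}\setminus\{A\}\colon x\in F\}|$, and establishing the degree bound $|\{F\ne A\colon x\in F\}|\le\lambda(s-1)$ for every $x\in A$ by playing the sets through $x$ against some $B\in\mathcal{F}$ with $x\notin B$. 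That degree lemma is precisely the piece missing from your sketch; if you intend to complete the write-up, aim the effort there rather than at the global triple count. Since the paper only cites this result, simply quoting Deza (or Jukna) would also be the appropriate course.
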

  The basic scheme for proving polynomial-time solvability of
  \hhDV{\alpha}{\beta} for~$\alpha\le\beta <2\lceil\alpha/2\rceil+2$
  is the following: The bounds on the minimum and maximum
  pairwise row Hamming distances imply that the column
  systems~$\setsys{x}$ for~$x = \alpha,\ldots,\beta$ form $x$-uniform
  weak~$\Delta$-systems. Using \autoref{lem:deza}, we then conclude
  that either the size of the instance is bounded by a constant
  depending on~$\beta$ only, or that the~$\setsys{x}$
  form sunflowers, which we can handle according to
  \autoref{lem:sunflower-solution}.

  As a final prerequisite, we prove the following easy but helpful
  lemma, concerning the intersection of sets with sunflowers.
  
  \begin{lemma}
    \label{lem:sunflower-intersect}
    Let~$\lambda \in \N$, let $\mathcal{F}$ be a sunflower with core~$C$ and let~$X$
    be a set such that~$|X \cap S| \ge \lambda$ for
    all~$S\in\mathcal{F}$.
    If~$|\mathcal{F}| > |X|$, then~$\lambda\le |C|$ and~$|X\cap C|\ge \lambda$.
  \end{lemma}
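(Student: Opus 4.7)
The plan is to proceed by contradiction on the claim $|X \cap C| \ge \lambda$, since once this is established the second conclusion $\lambda \le |C|$ follows immediately from $X \cap C \subseteq C$. So suppose for contradiction that $|X \cap C| < \lambda$. For every $S \in \mathcal{F}$, the sunflower property gives the disjoint decomposition $S = C \sqcup \pet{S}{}$ where $\pet{S}{} = S \setminus C$ is the petal of $S$; hence $|X \cap S| = |X \cap C| + |X \cap \pet{S}{}|$. Combined with the hypothesis $|X \cap S| \ge \lambda$, this yields $|X \cap \pet{S}{}| \ge \lambda - |X \cap C| \ge 1$ for every $S \in \mathcal{F}$, so in particular $X$ meets every petal and no petal is empty.

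Now I would exploit pairwise disjointness of the petals: since the sets $\pet{S}{}$ for $S \in \mathcal{F}$ are pairwise disjoint, the subsets $X \cap \pet{S}{}$ are also pairwise disjoint, and each is non-empty. Thus
\[
  |X| \;\ge\; \sum_{S \in \mathcal{F}} |X \cap \pet{S}{}| \;\ge\; |\mathcal{F}|,
\]
contradicting the assumption $|\mathcal{F}| > |X|$. Hence $|X \cap C| \ge \lambda$, and then $\lambda \le |X \cap C| \le |C|$ finishes the proof. There is no real obstacle here; the only point to watch is to use the strict inequality $|\mathcal{F}| > |X|$ (not $\ge$) together with the fact that each petal-intersection has at least one element, which is precisely what the strict bound $|X \cap C| < \lambda$ guarantees.
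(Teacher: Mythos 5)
Your proof is correct and follows essentially the same route as the paper's: assume $|X\cap C|<\lambda$, observe that $X$ must then meet each of the more than $|X|$ pairwise disjoint petals, and derive a contradiction. You merely spell out the counting step that the paper leaves implicit.
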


  \begin{proof}
    Assume towards a contradiction that~$|X \cap C|<\lambda$.
    Then $X$ would intersect each of the~$|\mathcal{F}|>|X|$ pairwise
    disjoint petals of~$\mathcal{F}$, which is not possible.
  \end{proof}
  We are now ready to prove the following theorem.

  \begin{theorem}
    \label{thm:hhDV-poly}
    \hhDV{\alpha}{\beta} is solvable
    \begin{enumerate}[1.)]
      \item\label{thm:hhDV-poly_ii} in $O(\min\{n,d\}\cdot nd)$ time if~$\beta\le\alpha+1$, and 
      \item\label{thm:hhDV-poly_iii} in $O(n^3d)$ time if 
        $\alpha$ is odd and~$\beta=\alpha+2$.
    \end{enumerate}
  \end{theorem}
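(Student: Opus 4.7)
The plan is to first apply the preprocessing of \autoref{lem:preprocessing} and then exhaustively apply \autoref{rr:padding}; together these take $O(\min\{n,d\}\cdot nd)$ time and leave a matrix with $\row{s}{n}=\mathbf{0}$, pairwise distinct columns, and no inessential columns. Because $\Ham(\mathbf{0},\row{s}{i})=w(\row{s}{i})\in[\alpha,\beta]$, every non-null row has weight in $\{\alpha,\ldots,\beta\}$, so the non-null rows partition into weight classes $I_\omega$ with column systems $\setsys{\omega}$ as in \autoref{def:colsys}.

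For the first case $\beta\le\alpha+1$, within each weight class $\omega\in\{\alpha,\alpha+1\}$ the pairwise Hamming distance $2\omega-2|W_{\row{s}{i}}\cap W_{\row{s}{j}}|$ is even and lies in $[\alpha,\alpha+1]$, so it is forced to the unique even integer in this interval; hence $\setsys{\omega}$ is an $\omega$-uniform weak $\Delta$-system, and an analogous parity calculation pins down cross-weight intersection sizes too. \autoref{lem:deza} then yields a dichotomy: either $|\setsys{\omega}|<\omega^2-\omega+2$, in which case the combination of rigid intersection sizes together with \autoref{rr:padding} forces the remaining instance into a structure that can be resolved directly, or $\setsys{\omega}$ is a sunflower, and \autoref{lem:sunflower-intersect} forces the sunflower cores of $\setsys{\alpha}$ and $\setsys{\alpha+1}$ to be compatible so that \autoref{lem:sunflower-solution} supplies a minimum solution by choosing one element from each petal plus (when necessary) a core element. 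Both branches fit within the preprocessing budget.

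For the second case $\alpha$ odd and $\beta=\alpha+2$, rows have weights in $\{\alpha,\alpha+1,\alpha+2\}$; since $\alpha$ and $\alpha+2$ are odd while $\alpha+1$ is even, within-class Hamming distances are even and inside $[\alpha,\alpha+2]$ and hence must equal $\alpha+1$, so each $\setsys{\omega}$ is still a weak $\Delta$-system and \autoref{lem:deza} applies as before. The new difficulty is that for classes of weights differing by one, the cross-class distance can equal either $\alpha$ or $\alpha+2$, so cross-class intersection sizes are no longer uniquely determined. The strategy is to classify each $\setsys{\omega}$ as a sunflower or as a bounded-size class via Deza's theorem and then enumerate the admissible alignments of cores and petals between classes before committing to a sunflower-style solution; the $O(n^3d)$ running time reflects that, in the worst case, the alignment check iterates over triples of rows, one per weight class, each verified in $O(d)$ time.

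The main obstacle is making the ``bounded-size'' branch of Deza's dichotomy algorithmically usable: the threshold $\omega^2-\omega+2$ grows with $\omega\le d$, so we cannot simply brute-force over subsets of columns. I expect to combine the rigid intersection sizes forced by the small gap between $\alpha$ and $\beta$ with the post-reduction absence of inessential columns to show that this branch actually reduces to a matrix simple enough to solve within the claimed time bounds, and to make the cross-class compatibility check in case~2 explicit enough to fit into the $O(n^3d)$ budget.
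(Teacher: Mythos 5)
Your skeleton matches the paper's: preprocess via \autoref{lem:preprocessing} and \autoref{rr:padding}, split the non-null rows into weight classes $I_\alpha,\ldots,I_\beta$, observe that parity forces each $\setsys{\omega}$ to be a uniform weak $\Delta$-system, and invoke \autoref{lem:deza} to get a small-vs-sunflower dichotomy. But two of your three branches are left open, and one of them is open for the wrong reason. The ``bounded-size'' obstacle you flag at the end is not a real obstacle: the weights $\omega$ range only over $\{\alpha,\ldots,\beta\}$ (as you yourself note), so the Deza threshold is the constant $c=\beta^2-\beta+2$, and when every $n_\omega<c$ the whole matrix has $n\le (\beta-\alpha+1)c+1=O(1)$ rows --- whence $d\le 2^n=O(1)$ because preprocessing removed duplicate columns --- and brute force over all column subsets runs in constant time. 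The genuine difficulty lies instead in the mixed cases you gloss over with ``compatible cores'': \autoref{lem:sunflower-solution} only applies when the \emph{entire} system $\bigcup_{\omega\ge 1}\setsys{\omega}$ is one sunflower, so when, say, $\setsys{\alpha+1}$ is a large sunflower but $I_\alpha\neq\emptyset$, you must use \autoref{lem:sunflower-intersect} together with the absence of inessential columns to derive a contradiction for $\alpha\ge 2$ (the petal columns would be weight-$1$ columns, hence inessential) and to pin down the surviving case $\alpha=1$, $n_1=1$, where the unique candidate solution is $K=[d]$. Without these contradictions you cannot conclude.

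For Statement~2 the proposal misses the actual crux. The hardest configuration is $n_{\alpha+1}\ge c$ and $n_{\alpha+2}\ge c$ with $\alpha\ge 3$: there the paper proves that every column outside the core $C'$ of $\setsys{\alpha+2}$ contains exactly two $1$'s, so that $S[[n-1],[d]\setminus C']$ is the incidence matrix of a bipartite graph on $I_{\alpha+1}$ versus $I_\alpha\cup I_{\alpha+2}$, and that an optimal solution is an $I_{\alpha+1}$-saturating matching (which exists by Hall's theorem, using the degree bounds) plus one distinguished column $z\in C'\setminus C$; a separate counting argument shows no solution of size $n_{\alpha+1}$ exists. Your plan to ``enumerate admissible alignments of cores and petals'' by iterating over triples of rows is not a concrete algorithm and does not obviously recover this matching structure; note also that the $O(n^3d)$ bound arises as ($O(n^2)$ candidate solutions) $\times$ ($O(nd)$ per feasibility check), not from checking row triples in $O(d)$ time each. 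As written, the proposal establishes the framework but not the case analysis that actually yields polynomial-time solvability.
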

  
  We prove both statements of \autoref{thm:hhDV-poly} separately. As mentioned, the basic structures of both proofs are similar: We first partition the column system into uniform weak $\Delta$-systems. Then, we consider each of the cases of which of the systems are sunflowers or of bounded size. Then, we leverage the preprocessing (\autoref{lem:preprocessing} and \autoref{rr:padding}) and our knowledge of solutions for sunflowers (\autoref{lem:sunflower-solution}) to show that only a small number of possible solutions are left. That is, the instances are essentially solved by the preprocessing routines and we can try out all remaining solutions to solve the instances in polynomial time. Showing that the remaining possible solutions are few seems more difficult for Statement~(\ref{thm:hhDV-poly_iii}.); thus, the proof of Statement~(\ref{thm:hhDV-poly_ii}.) can be seen as a ``warm-up''.

  \begin{proof}[Proof \normalfont(\autoref{thm:hhDV-poly}, Statement~(\ref{thm:hhDV-poly_ii}.)).]
  In the following, let~$I := (S\in\{0,1\}^{n\times d},k)$ be an instance of
  \hhDV{\alpha}{\beta} for~$\alpha\le\beta\le\alpha+1$.
  Recall that we assume~$I$ to be already preprocessed according to
  \autoref{lem:preprocessing} and \autoref{rr:padding}
  in~$O(\min\{n,d\}\cdot nd)$ time, that is,
  $S$~contains the null row vector, say~$\row{s}{n}=\mbf{0}$,
  no two column vectors are equal, which implies~$d\le 2^n$, and
  there are no inessential columns.
  We write~$W_i := W_{\row{s}{i}}=\{j\in[d]\mid s_{ij}=1\}$ for the
  set of column indices~$j$ where row vector~$\row{s}{i}$ equals~1,
  and we define~$W_{ij}:=W_i \cap W_j$.
  For~$\omega\in[d]$, let~$I_\omega:=\{i\in[n]\mid w(\row{s}{i})=\omega\}$ denote
  the set of indices of the weight-$\omega$ rows and
  let~$n_\omega:=|I_\omega|$.
  For ease of presentation, we sometimes identify columns or rows and their corresponding indices. 

    For all~$i\in[n-1]$, we have $$\Ham(\row{s}{i},\row{s}{n})=\Ham(\row{s}{i},\mbf{0})=
    w(\row{s}{i})\in\{\alpha,\alpha+1\}\text{.}$$
    Since also~$\Ham(\row{s}{i},\row{s}{j})=w(\row{s}{i}) + w(\row{s}{j}) -
    2|W_{ij}|\in\{\alpha,\alpha+1\}$ for
    all~$i\neq j\in[n-1]$, the following properties can be derived:
    \begin{align}
      \label{eqn:iiaa}\forall i,j\in I_\alpha,\;i\neq j : |W_{ij}| &= \lfloor \alpha/2 \rfloor,\\
      \label{eqn:iia1a1}\forall i,j\in I_{\alpha+1},\;i\neq j : |W_{ij}| &= \lceil (\alpha+1)/2 \rceil\text{, and}\\
      \label{eqn:iiaa1}\forall i\in I_\alpha,\;j\in I_{\alpha+1} : |W_{ij}| &= \lfloor (\alpha+1)/2 \rfloor.
    \end{align}
    For example, let us prove Property~(\ref{eqn:iiaa}). If $i, j \in I_\alpha$, then $2\alpha - 2|W_{ij}| \in \{\alpha, \alpha + 1\}$. If $\alpha$ is even, then, since $|W_{ij}|$ is an integer, $2\alpha - 2|W_{ij}| = \alpha$. Thus, $|W_{ij}| = \alpha/2 = \lfloor \alpha/2 \rfloor$. If $\alpha$ is odd, then $2\alpha - 2|W_{ij}| = \alpha + 1$ and, hence, $|W_{ij}| = (\alpha - 1)/2 = \lfloor \alpha/2 \rfloor$. This proves Property~(\ref{eqn:iiaa}). The proofs for the remaining properties are analogous.

    Property~(\ref{eqn:iiaa}) implies
    that~$\setsys{\alpha}:=\{W_i\mid i\in I_\alpha\}$ is an~$\alpha$-uniform weak~$\Delta$-system
    and Property~(\ref{eqn:iia1a1}) implies that $\setsys{\alpha+1}:=\{W_i\mid
    i\in I_{\alpha+1}\}$ is an~$(\alpha+1)$-uniform weak~$\Delta$-system.
    Let~$c:=(\alpha+1)^2-(\alpha+1)+2$.
    We can assume that~$\max\{n_\alpha,n_{\alpha+1}\} \ge c$
    because otherwise~$n\le 2c$ is of constant size, and thus also
    $d\le 2^n$ is of constant size (recall that we assume~$I$
    to be preprocessed according to \autoref{lem:preprocessing}),
    which implies that~$I$ is constant-time solvable.

    \medskip
    
    First, consider the case that~$n_{\alpha+1} \ge c$.
    Then, by \autoref{lem:deza}, it follows that~$\setsys{\alpha+1}$ is a
    sunflower with a core~$C$ of size~$\lceil (\alpha+1)/2 \rceil$ and
    petals~$\pet{W}{i}$, $i\in I_{\alpha+1}$, of size~$\alpha+1-|C|\ge 1$.
    For each~$i\in I_{\alpha+1}$ and each~$j\in I_\alpha$, it follows by
    Property~(\ref{eqn:iiaa1}) and \autoref{lem:sunflower-intersect}
    that~$W_{ij}\subseteq C$, that is~$\pet{W}{i}\cap W_j=\emptyset$.
    Hence, for~$x\in\pet{W}{i}$, the column vector~$\col{s}{x}$
    contains exactly one 1 (namely in the $i$-th row), that is,
    $\col{s}{x}=\mbf{1}^n_{\{i\}}$. Thus, column~$x$ exactly
    distinguishes row~$i$ from all other rows.
    For~$\alpha \ge 2$, each pair of rows differs in at least two
    columns. Thus, all rows in~$\Scol{[d]\setminus\{x\}}$ are still
    distinct and column~$x$ is in fact inessential, which
    yields a contradiction.
    Hence, we can assume that~$\alpha=1$.
    Since, then, for all~$i\in I_1$ and~$j\in I_2$ we have~$W_{ij}=W_i=C$,
    it follows that~$n_1 = 1$. See \autoref{sub:iia} for an
    illustrating example.
    The only possible solution is thus~$K=\bigcup_{i\in[n-1]}W_i =
    [d]$.
    
    If $n_{\alpha+1} < c$, then~$n_\alpha \ge c$ holds
    and \autoref{lem:deza} implies that~$\setsys{\alpha}$
    is a sunflower with a core~$C$ of size~$|C|=\lfloor \alpha/2 \rfloor$.
    If~$(\alpha+1)$ is even, then we have~$\lfloor (\alpha+1)/2
    \rfloor > |C|$, and thus, by Property~(\ref{eqn:iiaa1}) and
    \autoref{lem:sunflower-intersect}, it follows~$n_{\alpha+1}=0$
    (see \autoref{sub:iib}).
    Now, by \autoref{lem:sunflower-solution},~$I$ is a yes-instance
    if and only if~$k\ge n_\alpha$.
    If~$\alpha$ is even, then~$|C|=\lfloor(\alpha+1)/2\rfloor$,
    and thus, Property~(\ref{eqn:iiaa1}) and
    \autoref{lem:sunflower-intersect} imply that~$W_{ij}=C$ for
    all~$i\in I_\alpha$, $j\in I_{\alpha+1}$.
    Note that~$\col{s}{x}=\mbf{1}^n_{[n-1]}$ for all~$x\in C$, that
    is, column~$x$ exactly distinguishes row~$n$ from all others.
    Since~$\alpha$ is even, hence~$\alpha \ge 2$, it follows that
    column~$x$ is inessential,
    which again yields a contradiction.
    \begin{figure}[t]
    \centering
    \begin{subfigure}{.4\linewidth}
      \centering
      \begin{tikzpicture}[scale=0.5]
        \draw[help lines] (0,0) grid (5,6);
        \node at (0.5,5.5) {1};
        \node at (0.5,4.5) {1};
        \node at (0.5,3.5) {1};
        \node at (0.5,2.5) {1};
        \node at (0.5,1.5) {1};
        
        \foreach \x in {1,...,4} {
          \node at (\x+0.5, 6.5-\x) {1};
        }
      \end{tikzpicture}
      \caption{}
      \label{sub:iia}
    \end{subfigure}
    \begin{subfigure}{.4\linewidth}
      \centering
      \begin{tikzpicture}[scale=0.5]
        \draw[help lines] (0,0) grid (5,6);
        \foreach \x in {1,...,5} {
          \node at (\x-0.5, 6.5-\x) {1};
        }
      \end{tikzpicture}
      \caption{}
      \label{sub:iib}
    \end{subfigure}
    \caption{Examples of two possible instances for the
      case~$\alpha=1$, $\beta=2$.}
    \label{fig:ii-example}
  \end{figure}
\end{proof}
Next, we show that \hhDV{\alpha}{\beta} is solvable in $O(n^3d)$ time if $\alpha$ is odd and~$\beta=\alpha+2$. We use the same notation as in the proof of Statement~(\ref{thm:hhDV-poly_ii}.).
\begin{proof}[Proof \normalfont({\autoref{thm:hhDV-poly}, Statement~(\ref{thm:hhDV-poly_iii}.))}.]
    \setcounter{equation}{0} 
    Since~$\Ham(\row{s}{i},\row{s}{j}) =
    w(\row{s}{i})+w(\row{s}{j})-2|W_{ij}|\in\{\alpha,\alpha+1,\alpha+2\}$
    holds for all~$i\neq j \in [n]$,
    it follows that~$\Ham(\row{s}{i},\row{s}{n})=w(\row{s}{i})\in\{\alpha,\alpha+1,\alpha+2\}$
    holds for all~$i\in[n-1]$.
    By plugging in the respective values for~$w(\row{s}{i})$ and~$w(\row{s}{j})$ in the above formula for~$\Ham(\row{s}{i},\row{s}{j})$, the following properties can be derived (in an analogous way as for Properties~(\ref{eqn:iiaa}) to~(\ref{eqn:iiaa1}) in the proof of Statement~(\ref{thm:hhDV-poly_ii})):
    \begin{align}
      \label{eqn:iiiaa}\forall i,j\in I_\alpha,\; i\neq j: |W_{ij}| &= \lfloor\alpha/2\rfloor,\\
      \label{eqn:iiiaa1}\forall i\in I_\alpha,\;j\in I_{\alpha+1} : |W_{ij}| &\in \{\lfloor\alpha/2\rfloor,\lceil\alpha/2 \rceil\},\\
      \label{eqn:iiiaa2}\forall i\in I_\alpha,\;j\in I_{\alpha+2} : |W_{ij}| &= \lceil\alpha/2\rceil,\\
      \label{eqn:iiia1a1}\forall i,j\in I_{\alpha+1},\; i\neq j : |W_{ij}| &= \lceil\alpha/2\rceil,\\
      \label{eqn:iiia1a2}\forall i\in I_{\alpha+1},\;j\in I_{\alpha+2} : |W_{ij}| &\in \{\lceil\alpha/2\rceil,\lceil\alpha/2 \rceil+1\},\\
      \label{eqn:iiia2a2}\forall i,j\in I_{\alpha+2},\; i\neq j : |W_{ij}| &= \lceil\alpha/2\rceil+1.
    \end{align}
    Properties (\ref{eqn:iiiaa}), (\ref{eqn:iiia1a1}), and~(\ref{eqn:iiia2a2})
    imply that~$\setsys{\alpha}$, $\setsys{\alpha+1}$,
    and~$\setsys{\alpha+2}$ are $\alpha$-, $(\alpha+1)$-, and
    $(\alpha+2)$-uniform weak~$\Delta$-systems, respectively.

    In the following, we denote by~$U_\omega:=\bigcup_{i\in I_\omega}W_i$ the index set of
    the columns where at least one weight-$\omega$ row vector equals~1. Let~$c:=(\alpha+2)^2-(\alpha+2)+2$.
    For each~$x\in\{\alpha,\alpha+1,\alpha+2\}$, we either have~$n_x <
    c$ or~$n_x \ge c$.
    Overall, this gives eight possible cases, each of which we now
    show how to solve:

    
    \begin{inparaenum}[\bf {Case} I]%

      \item\label{case000} ($n_\alpha < c$, $n_{\alpha+1}<c$, $n_{\alpha+2}<c$).
        In this case, the number of rows
        in~$S$ is upper-bounded by a constant depending on~$\alpha$, and thus,
        $I$ is of overall constant size.

      \item\label{case100} ($n_\alpha \ge c$, $n_{\alpha+1}<c$, $n_{\alpha+2}<c$).
        Due to \autoref{lem:deza}, family~$\setsys{\alpha}$ forms a sunflower.
        Let~$C$ with~$|C|=\lfloor\alpha/2\rfloor$ be the
        core of~$\setsys{\alpha}$.
        For~$\alpha=1$, clearly,  any solution~$K$ contains all column
        indices from~$U_1$ in order to distinguish the weight-1 rows from
        the null vector. Since~$|U_2\cup U_3|\le 2n_2+3n_3$ is upper-bounded
        by a constant, the number of possible subsets~$K'\subseteq
        U_2\cup U_3$ is also upper-bounded by a constant.
        Thus, we only have to check a constant number of choices~$K= U_1\cup K'$.
        For~$\alpha\ge 3$, the size of a petal~$\pet{W}{i}$, $i\in
        I_\alpha$, is~$|\pet{W}{i}|=|W_i|-|C| = \alpha -
        \lfloor\alpha/2\rfloor= \lceil\alpha/2\rceil\ge 2$.
        Since the petals are pairwise disjoint, it follows that,
        for each petal~$\pet{W}{i}$, there exists a~$j\in I_{\alpha+1}\cup
        I_{\alpha+2}$ such that~$\pet{W}{i}\cap W_j\neq
        \emptyset$: Otherwise, the column vectors corresponding to the
        indices in a petal~$\pet{W}{i}$ are all equal
        to~$\mbf{1}^n_{\{i\}}$, that is, at least one of them is
        inessential, which is a contradiction.
        Since~$|U_{\alpha+1}\cup U_{\alpha+2}|$ is upper-bounded by a constant
        depending on~$\alpha$, also the number~$n_{\alpha}$ of
        petals in~$\setsys{\alpha}$ is upper-bounded by a constant, which yields
        an overall constant size of~$I$.

      \item\label{case010} ($n_\alpha < c$, $n_{\alpha+1}\ge c$, $n_{\alpha+2}<c$).
        By Property~(\ref{eqn:iiia1a1}) and \autoref{lem:deza}, family~$\setsys{\alpha+1}$ forms a sunflower with
        a core~$C$ of size~$|C|=\lceil\alpha/2\rceil$.
        The size of each petal~$\pet{W}{i}$, $i\in
        I_{\alpha+1}$, is thus~$|\pet{W}{i}|= \lceil\alpha/2\rceil$.
        Hence, for~$\alpha\ge 3$, the same arguments as in
        Case~(\ref{case100}) hold.
        For~$\alpha=1$, any solution~$K$
        can be written as~$K=C' \cup U_1\cup K_2 \cup K_3$,
        where~$C'\subseteq C$, $K_2\subseteq U_2\setminus C$
        and~$K_3\subseteq U_3$.
        Note that~$|C|$ and~$|U_3|$ are upper-bounded by a constant.
        Hence, the number of different subsets~$C'$ and~$K_3$ is also a
        constant.
        Since~$|\pet{W}{i}|=1$ holds for all~$i\in I_2$, we have~$|U_2
        \setminus C| = n_2$.
        From \autoref{lem:sunflower-solution}, it follows
        that~$|K_2|\ge n_2-1$.
        The overall number of possible choices for~$K_2$, and thus
        for~$K$, is in~$O(n)$.   

      \item\label{case001} ($n_\alpha < c$, $n_{\alpha+1}<c$, $n_{\alpha+2}\ge c$).
        By \autoref{lem:deza}, family~$\setsys{\alpha+2}$ forms a sunflower with
        core~$C$ of size~$|C|=\lceil\alpha/2\rceil+1$.
        The size of each petal~$\pet{W}{i}$, $i\in
        I_{\alpha+2}$, is thus~$|\pet{W}{i}|= \lceil\alpha/2\rceil$.
        Hence, for~$\alpha\ge 3$, the same arguments as in
        Case~(\ref{case100}) hold. For~$\alpha=1$, any solution~$K$
        can be written as~$K=C' \cup U_1\cup K_2 \cup K_3$,
        where~$C'\subseteq C$, $K_2\subseteq U_2$ and~$K_3\subseteq
        U_3\setminus C$. Note that~$|C|$ and~$|U_2|$ are upper-bounded by a constant.
        Hence, the number of different subsets~$C'$ and~$K_2$ is also a
        constant. \autoref{lem:sunflower-solution}
        implies that~$|K_3|\ge n_3 -1$. Since~$|U_3 \setminus C|=n_3$,
        this yields an overall number of~$O(n)$ possible choices for~$K$.

      \item\label{case110} ($n_\alpha \ge c$, $n_{\alpha+1}\ge c$, $n_{\alpha+2}< c$).
        Due to \autoref{lem:deza}, family~$\setsys{\alpha}$ forms a sunflower
        with a core~$C$ of
        size~$|C|=\lfloor\alpha/2\rfloor$ and~$\setsys{\alpha+1}$
        forms a sunflower with core~$C'$ of size~$|C'|=\lceil\alpha/2\rceil$.
        First, note that Property~(\ref{eqn:iiiaa2})
        implies~$|W_{ij}|=\lceil\alpha/2\rceil > |C|$
        for all~$i\in I_\alpha$, $j\in I_{\alpha+2}$,
        which is not possible due to \autoref{lem:sunflower-intersect}.
        Thus, it follows~$n_{\alpha+2}=0$.
        Moreover, since Property~(\ref{eqn:iiiaa1})
        implies~$|W_{ij}|\ge|C|$ for all~$i\in I_\alpha$ and $j\in
        I_{\alpha+1}$, \autoref{lem:sunflower-intersect} yields
        $C\subseteq W_{ij}$, and thus~$C\subset C'$.
        Hence, all column vectors in~$C$ equal~$\mbf{1}^n_{[n-1]}$,
        which yields a contradiction for~$\alpha \ge 3$ because
        the columns in~$C$ are then inessential.
        For~$\alpha=1$, any solution~$K$ can be written as
        $K = C''\cup U_1\cup K_2$, where~$C''\subseteq C'$
        and~$K_2\subseteq U_2$.
        By \autoref{lem:sunflower-solution}, we know that~$|K_2|\ge
        n_2-1$. Since~$|C'|=1$ and~$|U_2\setminus C'|=n_2$, there are $O(n)$ possible choices for~$K$.

      \item\label{case101} ($n_\alpha \ge c$, $n_{\alpha+1}<c$, $n_{\alpha+2}\ge c$).
        This case is not possible since we showed in
        Case~\ref{case110} that~$n_\alpha\ge c$ implies~$n_{\alpha+2}=0$.
      
      \item\label{case111} ($n_\alpha \ge c$, $n_{\alpha+1}\ge c$, $n_{\alpha+2}\ge c$).
        This case is also not possible, see Case~\ref{case110}.

      \item\label{case011} ($n_\alpha < c$, $n_{\alpha+1}\ge c$, $n_{\alpha+2}\ge c$).
        From \autoref{lem:deza} and Properties~(\ref{eqn:iiia1a1}) and~(\ref{eqn:iiia2a2}), respectively, it follows that~$\setsys{\alpha+1}$
        forms a sunflower with a core~$C$ of
        size~$|C|=\lceil\alpha/2\rceil$ and~$\setsys{\alpha+2}$
        forms a sunflower with core~$C'$ of
        size~$|C'|=\lceil\alpha/2\rceil+1$.
        Moreover, as in Case~\ref{case110}), Property~(\ref{eqn:iiia1a2}) and
        \autoref{lem:sunflower-intersect} imply~$C\subset C'$.

        If~$\alpha=1$, then any solution can be written as~$K=U_1\cup
        C''\cup K_2\cup K_3$, where~$C''\subseteq C'$, $K_2\subseteq
        U_2\setminus C$ and~$K_3\subseteq U_3\setminus C'$.
        Since~$|C'|=\lceil\alpha/2\rceil+1$, $|U_2\setminus C|=n_2$,
        $|U_3\setminus C'|=n_3$, and, by
        \autoref{lem:sunflower-solution}, $|K_2|\ge n_2-1$
        and~$|K_3|\ge n_3-1$, it follows that there are $O(n^2)$ possible
        choices for~$K$.

        \begin{figure}[t]
          \centering
           \begin{tikzpicture}[scale=0.5]
              \draw[fill=gray!20!white,opacity=.7] (2,0) rectangle (3,10);
              \draw[fill=gray!20!white,opacity=.7] (3,0) rectangle (4,10);
              \draw[fill=gray!20!white,opacity=.7] (5,0) rectangle (6,10);
              \draw[fill=gray!20!white,opacity=.7] (8,0) rectangle (9,10);
              \draw[fill=gray!20!white,opacity=.7] (9,0) rectangle (10,10);
              \draw[help lines] (0,0) grid (11,10);
              \node at (0.5,9.5) {1};
              \node at (0.5,8.5) {1};
              \node at (0.5,7.5) {1};
              \node at (0.5,6.5) {1};
              \node at (0.5,5.5) {1};
              \node at (0.5,4.5) {1};
              \node at (0.5,3.5) {1};
              \node at (0.5,1.5) {1};

              \node at (1.5,9.5) {1};
              \node at (1.5,8.5) {1};
              \node at (1.5,7.5) {1};
              \node at (1.5,6.5) {1};
              \node at (1.5,5.5) {1};
              \node at (1.5,4.5) {1};
              \node at (1.5,3.5) {1};
              \node at (1.5,2.5) {1};

              \node at (2.5,9.5) {1};
              \node at (2.5,8.5) {1};
              \node at (2.5,7.5) {1};
              \node at (2.5,2.5) {1};
              \node at (2.5,1.5) {1};

              \node at (3.5,9.5) {1};
              \node at (4.5,9.5) {1};
              \node at (5.5,8.5) {1};
              \node at (6.5,8.5) {1};
              \node at (7.5,7.5) {1};
              \node at (8.5,7.5) {1};

              \node at (3.5,6.5) {1};
              \node at (6.5,6.5) {1};
              \node at (4.5,5.5) {1};
              \node at (8.5,5.5) {1};
              \node at (7.5,4.5) {1};
              \node at (9.5,4.5) {1};
              \node at (5.5,3.5) {1};
              \node at (10.5,3.5) {1};
              
              \node at (9.5,2.5) {1};
              \node at (10.5,1.5) {1};

              \draw[thick] (3,1) rectangle (11,10);
              \draw[decorate,decoration={brace, amplitude=5pt, mirror}] (0,0)
              -- (3,0) node [midway, yshift=-10pt] {$C'$};
              \draw[decorate,decoration={brace, amplitude=5pt}] (0,10)
              -- (2,10) node [midway, yshift=10pt] {$C$};
              \node[above] at (2.5,10) {$z$};
              \draw[decorate,decoration={brace, amplitude=5pt, mirror}] (0,10)
              -- (0,7) node [midway, xshift=-10pt] {$I_5$};
              \draw[decorate,decoration={brace, amplitude=5pt,mirror}] (0,7)
              -- (0,3) node [midway, xshift=-10pt] {$I_4$};
              \draw[decorate,decoration={brace, amplitude=5pt,mirror}] (0,3)
              -- (0,1) node [midway, xshift=-10pt] {$I_3$};

              \node[draw, circle, inner sep=3pt] (51) at (14,8.0) {};
              \node[draw, circle, inner sep=3pt] (52) at (15.5,8.0) {};
              \node[draw, circle, inner sep=3pt] (53) at (17,8.0) {};

              \draw[decorate,decoration={brace, amplitude=5pt}] (13.5,8.5)
              -- (17.5,8.5) node [midway, yshift=10pt] {$I_5$};

              \node[draw, circle, inner sep=3pt] (31) at (19,8.0) {};
              \node[draw, circle, inner sep=3pt] (32) at (20.5,8.0) {};

              \draw[decorate,decoration={brace, amplitude=5pt}] (18.5,8.5)
              -- (21,8.5) node [midway, yshift=10pt] {$I_3$};
              
              \node[draw, circle, inner sep=3pt] (41) at (15,4.0) {};
              \node[draw, circle, inner sep=3pt] (42) at (16.5,4.0) {};
              \node[draw, circle, inner sep=3pt] (43) at (18,4.0) {};
              \node[draw, circle, inner sep=3pt] (44) at (19.5,4.0) {};

              \draw[decorate,decoration={brace, amplitude=5pt, mirror}] (14.5,3.5)
              -- (20,3.5) node [midway, yshift=-11pt] {$I_4$};
              
              \draw[line width=2pt] (51) -- (41);
              \draw (51) -- (42);
              \draw (52) -- (41);
              \draw[line width=2pt] (52) -- (44);
              \draw[line width=2pt] (53) -- (42);
              \draw (53) -- (43);
              \draw[line width=2pt] (43) -- (31);
              \draw (44) -- (32);
              
          \end{tikzpicture}
          \caption{An instance for the case~$\alpha=3$,
            $\beta=5$ (left). The submatrix framed by the thick
            rectangle defines a bipartite graph (right). An optimal solution is
            highlighted in gray. Note that the columns in the solution
            correspond to a matching in the bipartite graph that
            saturates~$I_4$, represented by the thick lines.}
          \label{fig:iii-example}
        \end{figure}%

        For $\alpha \geq 3$, we show that the matrix~$S$---recall that
        it is reduced with respect to \autoref{rr:padding}---has a
        specific structure, depicted in
        \autoref{fig:iii-example}. Namely, we claim that

        \begin{inparadesc}

        \item[(a)] if~$W_{ij} \setminus C' \neq \emptyset$ for~$i \neq j$,
          then~$i \in I_{\alpha + 1}$ and $j \in I_{\alpha} \cup
          I_{\alpha + 2}$, and
          
        \item[(b)] the unique column vector~$\col{s}{z}$ with~$z \in C \setminus C'$ equals~$\mbf{1}^n_{I_\alpha\cup I_{\alpha+2}}$.

        \end{inparadesc}
        \noindent Claim (a) implies that each column in~$[d] \setminus C'$ contains at most two 1's (naturally, any column contains at least one 1). We will see that all columns in~$[d] \setminus C'$ contain exactly two 1's and, hence, that they define the edges of a bipartite graph with the two partite vertex sets~$I_{\alpha + 1}$ and~$I_{\alpha} \cup I_{\alpha + 2}$. We find a matching that saturates~$I_{\alpha + 1}$ in this bipartite graph and show that the columns corresponding to the matching edges along with column~$z$ are an optimal solution.

\medskip
        
        To show Claim (a), observe that, if~$i \neq j \in I_{\alpha + 2}$, then~$W_{ij} \setminus C' = \emptyset$ as~$\setsys{\alpha + 2}$ is a sunflower with core~$C'$. Likewise, if~$i \neq j \in I_{\alpha + 1}$, then~$W_{ij} \setminus C' = \emptyset$ because~$\setsys{\alpha + 1}$ is a sunflower with core~$C$ and~$C \subset C'$. It hence suffices to show that~$W_{ij} \setminus C' = \emptyset$ in the case that either both~$i, j \in I_{\alpha}$ or $i \in I_{\alpha}$ and $j \in I_{\alpha + 2}$. To see the latter, note that Property~(\ref{eqn:iiiaa2})
        and \autoref{lem:sunflower-intersect} imply~$|W_i\cap C'|=\lceil\alpha/2\rceil=|C'|-1$, for all~$i\in
        I_\alpha$, that is, we even have $W_{ij}\subset C'$ for all~$i \in I_{\alpha}$, $j\in I_{\alpha+2}$. 
        Now, it only remains to show~$W_{ij} \subseteq C'$ for~$i, j
        \in I_{\alpha}$, $i \neq j$. We derived above that~$|W_i\cap
        C'|=|W_j\cap C'| = \lceil\alpha/2\rceil=|C'|-1$.
        Thus,~$|(W_i\cap C')\cap (W_j\cap C')| = |W_{ij} \cap C'| \geq |C'|-2 =
        \lfloor\alpha/2\rfloor$.
        By Property~(\ref{eqn:iiiaa}), $|W_{ij}| \leq \lfloor \alpha / 2 \rfloor$, which implies $W_{ij}\subseteq C'$.
        Hence, $W_{ij} \setminus C' = \emptyset$,
        completing the proof of Claim (a).


        Let us next prove Claim~(b), that
        is, $\col{s}{z}=\mbf{1}^n_{I_\alpha\cup I_{\alpha+2}}$ where~$z$ is the unique column in~$C \setminus C'$. Assume
        the contrary, that is, either a row in~$I_\alpha$ has a~0 at
        entry~$z$ or a row in~$I_{\alpha + 1}$ has a~1 at
        entry~$z$. Let us first show that~$s_{iz}=0$, that is,
        $z\not\in W_i$ is impossible for a row $i \in I_{\alpha}$.
        Using~$|W_i \cap C'| = |C'| - 1$, it follows that~$C\subset W_i$.
        Then, for all~$j\in I_{\alpha+1}\cup I_{\alpha+2}$,
        it holds~$W_{ij}\setminus C=\emptyset$ since otherwise either
        Property~(\ref{eqn:iiiaa1}) or Property~(\ref{eqn:iiiaa2}) is violated.
        Let us show that $W_{ij}\setminus C = \emptyset$ also for
        all~$j\neq i\in I_\alpha$. Recall that $W_{ij}\setminus C' = \emptyset$, as shown above. By
        assumption~$z\not\in W_{ij}$, yielding~$W_{ij}\setminus C =
        W_{ij}\setminus C' = \emptyset$. 
        But then, the columns
        in~$W_i\setminus C$ 
        equal~$\mbf{1}^n_{\{i\}}$. Note that~$|W_i \setminus C| \geq
        \alpha - \lfloor \alpha / 2 \rfloor \geq 2$ (recall
        that~$\alpha \geq 3$). Hence, there is at least one inessential column, a contradiction.
        By the same arguments (using Properties~(\ref{eqn:iiia1a1}) and~(\ref{eqn:iiia1a2})), we can infer that there is no~$i\in
        I_{\alpha+1}$ such that~$z\in W_i$, that is, $C'\subset W_i$.
        Hence, for~$z\in C'\setminus C$, it
        holds~$\col{s}{z}=\mbf{1}^n_{I_\alpha\cup I_{\alpha+2}}$, proving Claim~(b). Note that column~$z$ distinguishes all rows in~$I_{\alpha+1}$
        from all rows in~$I_\alpha\cup I_{\alpha+2}$.

        \medskip
        
        To finish the proof of Case~\ref{case011}, we need one more observation about the rows in~$I_\alpha$,
        namely that~$n_\alpha = \lceil \alpha / 2 \rceil$.
        Assume the contrary, that is, since we have~$|W_i \cap C'| =
        |C'| - 1 = |C| = \lceil\alpha/2\rceil$ and~$z\in W_i$ for all~$i \in
        I_\alpha$, there exists
        an~$x\in C$ such that~$x\in W_i$ for all~$i\in I_\alpha$.
        Then, $\col{s}{x}=\mbf{1}^n_{[n-1]}$ and thus, column~$x$
        is inessential, which is not possible.
        Hence, for each~$x\in C$, there exists an~$i\in I_\alpha$
        such that~$x\not\in W_i$.
        Since~$x\in C'$ and~$|W_i\cap C'|=|C'|-1$ it follows
        that~$W_i\cap C'= C'\setminus\{x\}$. Therefore, we
        have~$n_\alpha = |C| = \lceil \alpha / 2 \rceil$.

        We now derive a solution from the abovementioned bipartite graph. Consider the columns in~$[d]\setminus C'$.
        Clearly, if one of these columns contains only one~1,
        then this column is inessential, which yields a contradiction.
        Thus, each column contains at least two 1's.
        Using Claim~(a), each of the columns also has at most two 1's.         Also, after preprocessing, no two columns are equal.
        Thus, the submatrix~$S[[n-1],[d]\setminus C']$ (framed by thick
        lines in \autoref{fig:iii-example}) is
        the incidence matrix of a bipartite graph~$G$, where the rows
        correspond to the vertices (partitioned into~$I_{\alpha+1}$
        and~$I_{\alpha}\cup I_{\alpha+2}$) and the columns define the edges.
        Moreover, each vertex~$i \in I_{\alpha+2}$ has
        degree~$|W_i\setminus C'|=\lceil\alpha/2\rceil$, since~$\col{s}{z}=\mbf{1}^n_{I_\alpha\cup I_{\alpha+2}}$ also each vertex~$i \in I_{\alpha + 1}$ has degree~$\lceil \alpha / 2 \rceil$, and, since each row~$i \in I_\alpha$ has~$|W_i \cap C'| = \lceil \alpha / 2 \rceil$ (as derived above), each
        vertex~$i\in I_{\alpha}$ has degree~$|W_i\setminus
        C'|=\lfloor\alpha/2\rfloor$ in~$G$.
        We can now use Hall's theorem~\cite{BG09},
        to show that there exists a matching in~$G$ that saturates~$I_{\alpha+1}$,
        that is, a subset~$M\subseteq [d]\setminus C'$
        of~$n_{\alpha+1}$ columns such that~$|W_i\cap M|=1$ for all
        $i\in I_{\alpha+1}$ and~$|W_i\cap M|\le 1$ for all~$i\in
        I_\alpha\cup I_{\alpha+2}$.\footnote{Hall's theorem asserts that, for a bipartite
          graph~$G=(X\cup Y, E)$, there exists an $X$-saturating matching
          if and only if~$|T|\le |N_G(T)|$ holds for each
          subset~$T\subseteq X$.} Indeed, taking any subset~$T
        \subseteq I_{\alpha + 1}$ of vertices, consider the set~$N_G(T) \subseteq I_{\alpha} \cup I_{\alpha + 2}$ of neighbors of~$T$. Since the vertices in~$N_G(T)$ have at most the degree of any vertex in~$T$, we have~$|N_G(T)| \geq |T|$. Hence, the precondition of Hall's theorem is satisfied. Thus, $M$ exists as claimed. 

        We now claim that~$K:=M\cup \{z\}$
        with~$|K|=n_{\alpha+1}+1$ is an optimal solution
        (highlighted in gray in \autoref{fig:iii-example}).
        First, regarding $K$ being a solution, 
        since~$G$ is bipartite, we
        have~$n_{\alpha+2}\lceil\alpha/2\rceil +
        n_\alpha\lfloor\alpha/2\rfloor =
        n_{\alpha+1}\lceil\alpha/2\rceil$. From this, we can infer
        $n_{\alpha+1}=n_{\alpha+2}+\lfloor\alpha/2\rfloor =
        |I_\alpha\cup I_{\alpha+2}|-1$. Thus, as~$M$ saturates~$I_{\alpha + 1}$, there exists exactly one~$j \in I_\alpha \cup I_{\alpha + 2}$ such that $W_j \cap M = \emptyset$. Using this, it is not hard to check that~$K$ is a solution.

        Regarding optimality, it remains to show that there is no
        solution of size~$n_{\alpha+1}$. This can be seen as follows:
        \autoref{lem:sunflower-solution} implies that any solution~$K$
        intersects at least~$n_{\alpha+1}-1$ of the petals
        of~$\setsys{\alpha+1}$. If~$K$ intersects
        each petal, then, as~$n_{\alpha + 1} = n_\alpha + n_{\alpha + 2} - 1$, there exists a~$j\in I_{\alpha}\cup
        I_{\alpha+2}$ such that~$K\cap W_j=\emptyset$, which is not
        possible. Otherwise, if~$K$ intersects exactly all but one of
        the petals, then there exists an~$i\in I_{\alpha+1}$ and also~$j\neq
        j'\in I_\alpha\cup I_{\alpha+2}$ such that~$K\cap W_i\setminus
        C'=\emptyset$, $K\cap W_j\setminus C'=\emptyset$ and~$K\cap
        W_{j'}\setminus C'=\emptyset$. In order to distinguish row~$i$
        from the null vector, $K$ has to contain a column from~$C$.
        But it is not possible to pairwise distinguish all three rows~$i$, $j$,
        and~$j'$ from each other with just one column.
        Hence, $K$ is indeed optimal and~$I$ is a yes-instance if and
        only if~$k\ge n_{\alpha+1}+1$.
    \end{inparaenum}

    As regards the running time, observe that the maximum number of
    candidate solutions we have to test in any of the above cases is
    in~$O(n^2)$. Checking whether a subset of column indices is a solution
    can be done in~$O(nd)$ time via lexicographical sorting of the
    rows. This yields an overall running time in~$O(n^3d)$ which also
    subsumes the~$O(\min\{n,d\}\cdot nd)$ time for the preprocessing.
  \end{proof} 
  
  \section{Distinct Vectors on General Matrices}
  \label{sec:DVgeneral}
  In the last section, we have seen, among other results, that
  \DV{} is \NP-complete and \W{1}-hard with respect to the number~$t$
  of columns to be deleted even if the input alphabet is binary and
  the pairwise Hamming distance of the row vectors is bounded by four
  (\autoref{cor:distinct_vectors_w1-hard_t}).
  Note, however, that the parameterized complexity with respect to the
  number~$k$ of retained columns for binary alphabets remained open.
  In this section, we first show that \HS{} parameterized by the solution
  size (which is \W{2}-complete~\cite{downey13}) is
  parameterized reducible to \DV{} for
  alphabets of unbounded size, showing that \DV{} is \W{2}-hard with
  respect to~$k$ (\autoref{thm:distinct_vectors_w2-hard}).
  Nevertheless, we show later in this section some
  tractability results even for larger alphabets.
  For example, we give a problem kernel with respect to the combined
  parameter alphabet size~$|\Sigma|$ and number~$k$ of retained
  columns~(\autoref{thm:dv_kernel_sk}).
  Note that this result implies
  that \DV{} is fixed-parameter tractable with respect to~$k$
  for any alphabet of constant size.
  
  \begin{theorem}
    \label{thm:distinct_vectors_w2-hard}
    \DV{} is $W[2]$-hard with respect to the number~$k$ of retained columns. 
  \end{theorem}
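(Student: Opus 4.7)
The plan is to give a parameterized reduction from \HS{} parameterized by solution size~$k$, which is known to be $W[2]$-complete, to \DV{} parameterized by the number~$k$ of retained columns, building the \DV{} instance so that retaining a column corresponds to picking an element of the universe into the hitting set (plus one unavoidable auxiliary column used as a ``marker'').

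Concretely, given an \HS{} instance $(U,\mathcal{F},k)$ with $U=\{u_1,\dots,u_n\}$ and $\mathcal{F}=\{F_1,\dots,F_m\}$ (we may assume each $F_j\neq\emptyset$), I would build a matrix $S$ with $2m$ rows and $d=n+1$ columns over the alphabet $\Sigma=\{0,1,\dots,m\}$ as follows. The first $n$ columns are indexed by the elements of $U$, and column $n+1$ is a marker column. For every $j\in[m]$, introduce two rows $r_j^0$ and $r_j^1$: both carry the symbol~$j$ in column $n+1$; the row $r_j^0$ is $0$ in all of columns $1,\dots,n$; the row $r_j^1$ has a $1$ in column $i$ iff $u_i\in F_j$ and $0$ elsewhere. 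Set the retained-column budget to $k':=k+1$, which depends only on~$k$. The construction is clearly computable in polynomial time and all $2m$ rows are pairwise distinct (the marker column always separates rows of different index~$j$, and $r_j^0\neq r_j^1$ because $F_j\neq\emptyset$).

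The key observation driving correctness is that the pair $(r_j^0,r_{j'}^0)$ for $j\neq j'$ differs \emph{only} in column $n+1$; hence column $n+1$ must belong to every solution. Once column~$n+1$ is retained, it already distinguishes every pair of rows whose indices differ in the first coordinate (i.e., all pairs $(r_j^\ast,r_{j'}^\ast)$ with $j\neq j'$), so the only remaining requirement is that, for every $j$, the pair $(r_j^0,r_j^1)$ is distinguished. These two rows differ exactly in the columns $\{i\in[n] : u_i\in F_j\}$, so this last requirement is equivalent to the retained subset of $[n]$ forming a hitting set for $\mathcal{F}$. Thus $(U,\mathcal{F},k)$ is a yes-instance of \HS{} iff $(S,k+1)$ is a yes-instance of \DV{}, and the map $K\mapsto\{u_i : i\in K\cap[n]\}$ witnesses the backwards direction.

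The main obstacle to anticipate is ensuring that the ``side constraints'' among row pairs $(r_j^{\epsilon},r_{j'}^{\epsilon'})$ with $j\neq j'$ do not force the retention of additional columns from $[n]$ beyond those needed to hit the sets $F_j$; the marker column handles this uniformly, which is why the reduction needs unbounded alphabet size (one fresh symbol per set of $\mathcal{F}$). It is also worth noting that this construction is not a reduction from binary \DV{}, and indeed Theorem~\ref{thm:dv_kernel_sk}, established later, rules out such a W[2]-hardness reduction for bounded alphabets unless FPT collapses with $W[2]$.
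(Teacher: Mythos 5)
Your reduction is correct and is essentially the paper's: both reduce from the W[2]-complete \HS{} problem using an unbounded alphabet with one fresh symbol per set so that only the constraints encoding ``hit $F_j$'' remain binding --- the paper writes the symbol $i$ directly into the incidence-matrix row of the $i$-th set and adds a single all-zero row with budget $k'=k$, whereas you isolate the per-set symbols in a forced marker column and use a pair of rows per set with budget $k'=k+1$. The only nit is that forcing the marker column requires two distinct set indices, so you should explicitly dispose of instances with $|\mathcal{F}|\le 1$ beforehand (for $|\mathcal{F}|=1$ and $k=0$ your map sends a no-instance to a yes-instance); this is harmless since such instances are solvable in polynomial time.
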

  \begin{proof}
    We give a parameterized reduction from the \W{2}-complete \HS{} problem parameterized by solution size~$k$.
    \problemdef{\HS}
    {A finite universe $U$, a collection $\mathcal{C}$ of subsets of $U$, and a nonnegative integer $k$.}
    {Is there a subset $K\subseteq U$ with $|K|\leq k$ such that $K$ contains at least one element from each subset in~$\mathcal{C}$?}
    Given an instance~$(U,\mathcal{C},k)$ of \HS{} with $U=\{u_1,\dots,u_m\}$ and $\mathcal{C} = \{C_1,\dots,C_n\}$,
    we define the \DV{} instance $(S,k')$ where~$k':=k$ and the $(n+1)\times m$ matrix~$S$ is defined as
    \[s_{ij} := \begin{cases}i, & u_j\in C_i\\0, & u_j\not\in
      C_i\end{cases}\]
    for all~$i\in[n]$, $j\in[m]$ and~$\row{s}{n+1}:=\mbf{0}$.
    This instance is polynomial-time computable.
    An example is depicted in \autoref{fig:HS-DV-example}.
    If~$K\subseteq U$ is a solution of~$(U, \mathcal{C}, k)$, then~$K
    \cap C_i\neq\emptyset$ holds for all~$C_i\in\mathcal{C}$,
    and thus, for each row~$\row{s}{i}$, there exists a column~$j$
    corresponding to some element~$u_j\in K$ such that~$s_{ij}=i$.
    Since no other row contains an entry equal to~$i$, it follows
    that row~$\row{s}{i}$ is distinct from all other rows in~$S$.
    Conversely, in order to distinguish row~$\row{s}{i}$
    from~$\row{s}{(n+1)}=\mbf{0}$, any solution~$K'$ of~$(S,k')$
    has to contain a column index~$j$ such that~$s_{ij}\neq 0$.
    This implies that the subset~$\{u_j\mid j\in K'\}\subseteq U$
    contains at least one element of each~$C_i$ and is thus a solution of the original instance.
    Finally, note that this is a parameterized reduction since~$k'=k$.
  \end{proof}

  \begin{figure}[t]
    \centering
    \begin{tikzpicture}[scale=.4]
        \node[right] at (-10,6) {$U=\{1,\ldots,6\}$};
        \node[right] at (-10,4.5) {$C_1=\{1,2,3\}$};
        \node[right] at (-10,3.5) {$C_2=\{3,4\}$};
        \node[right] at (-10,2.5) {$C_3=\{1,3,6\}$};
        \node[right] at (-10,1.5) {$C_4=\{1,2,4,5\}$};
        \node[right] at (-10,0.5) {$C_5=\{1,5,6\}$};
        
        \draw[help lines] (0,0) grid (6,5);
        \foreach \y in {1,...,5}{
          \node at (-0.7,5.5-\y) {$C_\y$};
        }
        \foreach \x in {1,...,6}{
          \node at (\x-0.5,5.5) {\sffamily\scriptsize\x};
        }
        \node at (0.5,4.5) {1};
        \node at (1.5,4.5) {1};
        \node at (2.5,4.5) {1};
        \node at (2.5,3.5) {2};
        \node at (3.5,3.5) {2};
        \node at (0.5,2.5) {3};
        \node at (2.5,2.5) {3};
        \node at (5.5,2.5) {3};
        \node at (0.5,1.5) {4};
        \node at (1.5,1.5) {4};
        \node at (3.5,1.5) {4};
        \node at (4.5,1.5) {4};
        \node at (0.5,0.5) {5};
        \node at (5.5,0.5) {5};
        \node at (4.5,0.5) {5};
        \draw[thick] (2,0) rectangle (3,6);
        \draw[thick] (4,0) rectangle (5,6);
      \end{tikzpicture}
    \caption{Example of a \HS{} instance~(left) and the constructed
      matrix~(right).
    The hitting set~$K=\{3,5\}$ is indicated by thick lines.}
    \label{fig:HS-DV-example}
  \end{figure}
  
  \citet{chen05} showed that \HS{} cannot be solved in
  $|U|^{o(k)} \cdot |I|^{O(1)}$ time, unless $\FPT=\W{1}$.
  Since the reduction from \HS{} yields an instance with~$d=|U|$
  columns and solution size~$k$ in polynomial time,
  the following corollary is immediate.
  
  \begin{corollary}
    \label{cor:distinct_vectors_runningtime_lower_bound}
    If $\FPT\neq\W{1}$, then \DV{} cannot be solved in~$d^{o(k)} \cdot |I|^{O(1)}$ time.
  \end{corollary}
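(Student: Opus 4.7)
The plan is to derive this running-time lower bound as an immediate consequence of the parameterized reduction used in the proof of \autoref{thm:distinct_vectors_w2-hard} combined with the known lower bound for \HS{} due to \citet{chen05}.

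First, I would recall that \citet{chen05} showed that \HS{} cannot be solved in $|U|^{o(k)}\cdot |I|^{O(1)}$ time unless $\FPT = \W{1}$, where $U$ is the universe and $k$ is the solution size. Next, I would observe that the reduction from \HS{} to \DV{} given in the proof of \autoref{thm:distinct_vectors_w2-hard} maps an \HS{} instance $(U,\mathcal{C},k)$ to a \DV{} instance $(S,k')$ in polynomial time, with the matrix $S$ having exactly $d = |U|$ columns and retained-column budget $k' = k$.

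The contradiction is then immediate: suppose, for the sake of contradiction, that \DV{} could be solved in $d^{o(k)}\cdot |I|^{O(1)}$ time. Given any \HS{} instance $(U,\mathcal{C},k)$, we first apply the polynomial-time reduction above to obtain an equivalent \DV{} instance with $d=|U|$ columns and parameter $k'=k$, and then run the hypothetical \DV{} algorithm on it. The total running time would be $|U|^{o(k)}\cdot (|U|+|\mathcal{C}|)^{O(1)}$, contradicting the lower bound of \citet{chen05}. Hence \FPT{}$=$\W{1} would follow.

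There is no real obstacle here, since the reduction is already in place and the size bounds ($d=|U|$, $k'=k$) are explicit. The only point to mention clearly is that the reduction is polynomial in $|I|$ (so the polynomial overhead of the reduction merges into the $|I|^{O(1)}$ factor), which is evident from inspection of the construction in the proof of \autoref{thm:distinct_vectors_w2-hard}.
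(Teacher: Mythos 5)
Your proposal is correct and follows exactly the argument the paper gives: it composes the lower bound of \citet{chen05} for \HS{} with the polynomial-time reduction from \autoref{thm:distinct_vectors_w2-hard}, using that the resulting instance has $d=|U|$ columns and parameter $k'=k$. Nothing further is needed.
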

  \noindent
  On the positive side, \DV{} can trivially be solved by trying
  all subsets of column indices of size~$k$
  within~$d^k \cdot |I|^{O(1)}$ time.
  
  Although \autoref{thm:distinct_vectors_w2-hard} shows that
  \DV{} is \W{2}-hard with respect to the parameter~$k$,
  we can provide a problem kernel for \DV{}
  if we additionally consider the input alphabet size~$|\Sigma|$ as a second parameter.
  The size of the problem kernel is superexponential in the combined
  parameter~$(|\Sigma|,k)$.
  Clearly, a problem kernel of polynomial size would be desirable.
  However, polynomial-size problem kernels do not exist even with the additional
  parameter number~$n$ of rows, unless $\NP \subseteq \coNPpoly$,
  which would imply a collapse of the polynomial hierarchy in complexity
  theory, which is widely believed not to be the case.
  \begin{theorem}
    \label{thm:dv_kernel_sk}
    For \DV,
    \begin{enumerate}[1.)]
    \item\label{thm:dv_kernel_sk_i}
      there exists an
      $O\large({{|\Sigma|}^{{|\Sigma|}^k+k}}/{{|\Sigma|}!}\cdot\log{|\Sigma|}\large)$-size
      problem kernel computable in~$O(d^2n^2)$ time and

    \item\label{thm:dv_kernel_sk_ii}  unless $\NP \subseteq \coNPpoly$,
      there is no polynomial-size problem kernel with respect to the
      combined parameter $(n, {|\Sigma|}, k)$.
    \end{enumerate}
  \end{theorem}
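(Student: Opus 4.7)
For Statement~(\ref{thm:dv_kernel_sk_i}), my plan is to apply two data reduction rules. First, if $n > |\Sigma|^k$, I would output a trivial no-instance: any set of~$k$ retained columns produces row-vectors in~$\Sigma^k$, which can take at most $|\Sigma|^k$ distinct values, so more rows cannot be pairwise separated. Second, I call two columns \emph{equivalent} if they induce the same partition of the row indices (equivalently, if they distinguish exactly the same pairs of rows), and remove one column from each pair of equivalents. Correctness follows from the characterization of a solution as a set of columns whose induced row-partitions meet to the discrete partition: swapping an equivalent column never changes the meet, so no solution is lost.

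After exhaustive application, $n \leq |\Sigma|^k$, and every remaining column corresponds to a distinct partition of~$[n]$ into at most~$|\Sigma|$ unlabeled blocks. Viewing each such partition as an orbit of the natural symbol-permutation action (of order~$|\Sigma|!$) on $\Sigma^n$, I would bound the number of columns by $|\Sigma|^{|\Sigma|^k}/|\Sigma|!$; encoding each of the $n\cdot d$ entries in $O(\log|\Sigma|)$ bits then gives the claimed kernel size. The running time $O(d^2n^2)$ comes from pairwise equivalence checks between columns, each of which takes $O(n^2)$ time (verify $s_{ij_1}=s_{i'j_1} \iff s_{ij_2}=s_{i'j_2}$ for all row pairs~$i,i'$). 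The most delicate step will be the orbit-counting bound, since not every orbit has the full size~$|\Sigma|!$, so a careful stabilizer analysis (or a direct bound via Stirling numbers) will be needed to get the kernel size in exactly the claimed closed form.

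For Statement~(\ref{thm:dv_kernel_sk_ii}), my plan is to interpret the reduction constructed in the proof of \autoref{thm:distinct_vectors_w2-hard} as a polynomial parameter transformation. That reduction maps a \HS{} instance $(U,\mathcal{C},k)$ to a \DV{} instance with $n'=|\mathcal{C}|+1$, $|\Sigma'|=|\mathcal{C}|+1$, and $k'=k$, so the combined parameter $(n',|\Sigma'|,k')$ is polynomially bounded in $(|\mathcal{C}|,k)$. Via the standard duality swapping elements and sets, \HS{} parameterized by the number of sets is equivalent to \SC{} parameterized by the universe size, which is known to admit no polynomial kernel unless $\NP\subseteq\coNPpoly$. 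A polynomial kernel for \DV{} in $(n,|\Sigma|,k)$ would, composed with the polynomial parameter transformation and a polynomial-time back-reduction exploiting \NP-membership of \DV{}, yield such a kernel for \HS{} in~$(|\mathcal{C}|,k)$, contradicting the lower bound. The main obstacle here is identifying (or, if necessary, proving via an \textsc{Or}-cross-composition) the appropriate negative kernelization result for the chosen source problem; the polynomial parameter transformation itself is immediate from the \W{2}-hardness reduction.
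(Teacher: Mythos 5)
Your plan for Statement~(\ref{thm:dv_kernel_sk_ii}) is essentially the paper's proof: it reuses the \W{2}-hardness reduction, composes it with the standard \SC-to-\HS{} duality, invokes the known result that \SC{} admits no polynomial kernel for the combined parameter (universe size, solution size) unless $\NP\subseteq\coNPpoly$ (Dom, Lokshtanov, and Saurabh), and transforms back using NP-completeness of \SC{}. The only thing missing there is pinning down that citation; no cross-composition is needed.

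There is, however, a genuine gap in your plan for Statement~(\ref{thm:dv_kernel_sk_i}), in the step that bounds the number of columns. Your second rule deletes a column only when another column induces the \emph{same} partition of the rows, so after exhaustive application you retain one column per \emph{distinct} set partition of~$[n]$ into at most~$|\Sigma|$ non-empty blocks. That count is a sum of Stirling numbers of the second kind, and it is \emph{not} bounded by~$|\Sigma|^n/|\Sigma|!$: already for~$n=2$, $|\Sigma|=3$ there are~$2$ such partitions versus~$9/6$, and for~$|\Sigma|$ comparable to~$n$ the count approaches the Bell number~$B_n\sim(n/\ln n)^n$ while~$|\Sigma|^n/|\Sigma|!$ is only about~$e^n$. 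So the orbit-counting inequality you hope to establish is simply false --- orbits of colorings using fewer than~$|\Sigma|$ symbols have size strictly smaller than~$|\Sigma|!$, and there are too many of them for any stabilizer analysis to repair this. The paper closes exactly this hole with a strictly stronger reduction rule: delete column~$j'$ whenever some other column~$j$ partitions the rows \emph{finer} than~$j'$ does (then~$j$ separates every pair that~$j'$ separates, so correctness is just as easy as for your rule). After that rule the surviving partitions form an \emph{antichain} in the refinement order on partitions with at most~$|\Sigma|$ blocks, and it is the maximum-antichain bound in the partition lattice (the paper cites Gr\"atzer's book) that yields~$d\le|\Sigma|^n/|\Sigma|!$ and hence the stated kernel size~$O\bigl(|\Sigma|^{|\Sigma|^k+k}/|\Sigma|!\cdot\log|\Sigma|\bigr)$. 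Your weaker rule still yields \emph{some} kernel in~$(|\Sigma|,k)$ (e.g., $d\le B_n\le n^n\le|\Sigma|^{k|\Sigma|^k}$ after the first rule), hence fixed-parameter tractability, but not the bound claimed in the theorem. Your first rule ($n\le|\Sigma|^k$ or answer no) and the~$O(d^2n^2)$ running-time analysis coincide with the paper's.
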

  \begin{proof}
    \ref{thm:dv_kernel_sk_i}.)
    Since~$S$ contains~$n$ rows, it follows that at least~$k\ge\lceil\log_{|\Sigma|} n\rceil$
    columns are required to distinguish all rows,
    otherwise we simply return a trivial no-instance.
    Thus, we have $n\leq{|\Sigma|}^k$.
    Moreover, note that each column partitions the rows into at
    most~${|\Sigma|}$ non-empty subsets (all rows with identical
    values form a subset of the partition).
    We use the following simple data reduction rule:
    If column~$j$ partitions the rows \emph{finer} than column~$j'$
    (that is, each set in the partition of~$j$ is a subset of a set in
    the partition of~$j'$), then delete column~$j'$.
    This rule clearly is correct since column~$j$ distinguishes all
    pairs of rows that are distinguishable by column~$j'$.
    Exhaustive application of the above rule requires $O(d^2n^2)$
    arithmetic operations (checking for all~$j,j'\in[d]$
    whether~$s_{ij'}\neq s_{i'j'} \Rightarrow s_{ij} \neq s_{i'j}$
    holds for all~$i,i'\in[n]$).
    It follows that for each remaining pair of columns,
    the partition of one is not finer than the partition of the other.
    Thus, we can bound the number~$d$ of columns from above
    by~${{|\Sigma|}^n}/{{|\Sigma|}!}$. (More precisely, $d$ is bounded by the
    cardinality of a maximum \emph{antichain},
    that is, a set of partitions being pairwise incomparable with
    respect to the ``finer than'' order,
    in the \emph{partition lattice} of an~$n$-element set up to
    the~${|\Sigma|}$-th level, see \citet[Chapter IV.4]{graetzer03} for details).
    The overall size of~$S$ is thus in
    \[O(nd\cdot\log|\Sigma|)=O({{|\Sigma|}^{{|\Sigma|}^k+k}}/{{|\Sigma|}!}
    \cdot\log{|\Sigma|}),\] which yields a problem kernel with respect to the combined parameter~$(|\Sigma|,k)$.

    \ref{thm:dv_kernel_sk_ii}.)
    We give a lower bound on the size of a problem kernel based on a
    result by \citet{dom14}, who showed that there is no polynomial-size
    problem kernel for \SC{} with respect to the combined parameter~$(|U|, k)$, unless
    $\NP \subseteq \coNPpoly$ (which implies the collapse of the
    polynomial hierarchy).

    \problemdef{\SC}
    {A finite universe $U$, a collection $\mathcal{C}$ of subsets of $U$, and a nonnegative integer $k$.}
    {Is there a subset $S\subseteq \mathcal{C}$ with $|S|\leq k$ such
      that each element of~$U$ is contained in at least one subset in~$S$?}
    
    The reduction from \HS{} in the proof
    of~\autoref{thm:distinct_vectors_w2-hard} can be used to obtain a reduction from \SC{}, when first transforming \SC{} into
    \HS{} in the common way~\cite{AAP80}, that is, the universe of the \HS{}
    instance is~$\mathcal{C}$ and for each element $u\in U$, there is the
    subset~$\{C\in\mathcal{C}\mid u\in C\}$.
    The resulting \DV{} instance consists of a matrix
    with~$n = |U|$ rows over an alphabet of size~${|\Sigma|} = |U|+1$
    and a sought solution size~$k$.
    Since \SC{} is NP-complete, there is a polynomial-time many-one reduction
    from \DV{} to \SC{} and, hence, a polynomial-size kernel for \DV{}
    would imply a polynomial-size kernel for \SC{}: simply transform the
    \SC{} instance into a \DV{} instance, kernelize, and transform
    back.
  \end{proof}

  Observe the gap between the superpolynomial lower bound and the
  superexponential upper bound on the problem kernel size in
  \autoref{thm:dv_kernel_sk}, which leaves a significant gap. Proving the (non)-existence of a~$|\Sigma|^{O(k)}$-size
  problem kernel, for example, would be an interesting result.
  
  We now move on to parameterizing by the maximum pairwise row Hamming distance~$H$.
  Recall \autoref{def:hamming}, where, for a
  matrix~$S\in\Sigma^{n\times d}$,
  we defined~$H:=\max_{i\neq j\in[n]}\Ham(\row{s}{i},\row{s}{j})$.
  In this case, every pair of rows in~$S$ differs in at most~$H$
  columns, which yields a kernelization and also a fairly simple
  approximation algorithm by a reduction from \DV{} to \HS{}:

  \begin{theorem}
    \label{thm:dv_kernel_hk}
    Let $H$ be the maximum pairwise row Hamming distance of the
    input matrix. Then, \DV{}
    \begin{enumerate}[1.)]
    \item  \label{thm:dv_kernel_hk_i} is linear-time factor-$H$ approximable and
    \item  \label{thm:dv_kernel_hk_ii} admits an $O(g(H,k)^2\log g(H,k))$-size
      problem kernel which can be computed in $O(d^2 + n^2\max\{d\log d,
      dn^2\})$ time, where~$g(H,k):=H!\cdot H^{H+1}\cdot(k+1)^H$.
    \end{enumerate}
  \end{theorem}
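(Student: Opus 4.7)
The plan is to exploit a natural reduction from \DV{} to the $H$-bounded variant of \HS{}. Given a \DV{} instance $(S,k)$, define the \HS{} instance $(U,\mathcal{C},k)$ with $U:=[d]$ and $\mathcal{C}:=\{D_{ij}: i\neq j\in[n]\}$. A subset $K\subseteq [d]$ renders all rows of $\Scol{K}$ pairwise distinct if and only if $K\cap D_{ij}\neq \emptyset$ for every $i\neq j$, i.e., iff $K$ is a hitting set for $\mathcal{C}$; moreover, by the definition of $H$, each set in $\mathcal{C}$ has size at most~$H$. Hence \DV{} with parameter $H$ is a special case of $H$-bounded \HS{}, and both parts of the theorem will follow from standard results for $H$-\HS{}.

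For part~(\ref{thm:dv_kernel_hk_i}.) I would use the textbook greedy factor-$H$ approximation: starting from $K=\emptyset$, repeatedly pick any pair of rows $(i,j)$ that is not yet separated by $\Scol{K}$ and add the at most $H$ columns of $D_{ij}$ to $K$. Every feasible solution must choose at least one element of $D_{ij}$, so $|K|\le H\cdot k^\ast$ where $k^\ast$ is the optimum. To implement this in linear time without materialising $\mathcal{C}$, I would maintain the partition of rows induced by $\Scol{K}$ and refine it via bucket-sort on the columns added in each step, using any non-singleton block to supply the next pair.

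For part~(\ref{thm:dv_kernel_hk_ii}.) I would kernelise the derived \HS{} instance with the standard sunflower-lemma reduction rule and translate the result back to a \DV{} instance via the construction in the proof of \autoref{thm:distinct_vectors_w2-hard}. The sunflower rule says: if $\mathcal{C}$ contains a sunflower $S_1,\dots,S_{k+1}$ with core $C$, any hitting set of size at most $k$ that misses $C$ must meet each of the $k+1$ pairwise-disjoint petals, which is impossible; hence deleting any one $S_i$ preserves the answer. By the Erdős--Ko--Rado sunflower bound, exhaustive application leaves $|\mathcal{C}'|\le H!\cdot (k+1)^H$ sets on at most $H\cdot H!\cdot (k+1)^H$ elements, and columns outside $\bigcup\mathcal{C}'$ can be removed from~$S$ by the same sunflower argument. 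Feeding $(\bigcup\mathcal{C}',\mathcal{C}',k)$ into the construction of \autoref{thm:distinct_vectors_w2-hard} then produces a \DV{} matrix with $|\mathcal{C}'|+1$ rows, $|\bigcup\mathcal{C}'|$ columns, and alphabet of size $|\mathcal{C}'|+1$, whose bit-size is bounded by $O(g(H,k)^2\log g(H,k))$ via a routine calculation.

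The main obstacle will be to fit the sunflower-elimination within the stated running time $O(d^2 + n^2\max\{d\log d,\,dn^2\})$, since $\mathcal{C}$ can have up to $\binom{n}{2}$ sets and a naive re-scan after every deletion would be far too expensive. I would compute $\mathcal{C}$ once in $O(n^2 d)$ time, bucket its sets by size (at most $H$ buckets), sort each bucket by a canonical encoding of the set, and use a trie-like data structure to locate sunflower candidates so that each deletion amortises into the budget. Verifying correctness of removing unused columns after kernelisation and checking the bit-size accounting against $g(H,k)^2\log g(H,k)$ are then routine.
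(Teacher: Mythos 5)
Your overall strategy coincides with the paper's: reduce \DV{} to \HS{} with sets of size at most~$H$ (the sets $D_{ij}$), approximate respectively kernelize that instance, and, for the kernel, transform back into a \DV{} instance via the construction from \autoref{thm:distinct_vectors_w2-hard}. Part~(1.) is fine. The one genuine flaw is your sunflower reduction rule in part~(2.): given a sunflower $S_1,\dots,S_{k+1}$ with core~$C$, it is \emph{not} sound to delete one of the~$S_i$. A hitting set~$K$ of size~$k$ for the remaining family could pick one element from each of the $k$ surviving pairwise disjoint petals and nothing from~$C$; such a~$K$ hits every remaining set but may miss~$S_i$ entirely, so a yes-instance of the reduced family need not be a yes-instance of the original. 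The correct versions are either to demand a sunflower with $k+2$ petals before deleting one set (so that a $(k+1)$-petal sunflower survives and still forces every size-$k$ hitting set into the core, hence into the deleted set), or to replace the whole sunflower by its core~$C$ (rejecting if $C=\emptyset$). This is a standard off-by-one and easy to repair, but as written the step fails. (Minor: the sunflower lemma is due to Erd\H{o}s and Rado; Erd\H{o}s--Ko--Rado is a different theorem.)

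Beyond that, the part you flag as ``the main obstacle''---fitting the sunflower elimination into the stated running time---is exactly what the paper outsources: it invokes an off-the-shelf linear-time kernelization for \fHS{} due to van Bevern, which is the source of both the concrete bound $g(H,k)=H!\cdot H^{H+1}\cdot(k+1)^H$ on $|U'|$ and $|\mathcal{C}'|$ and the running time $O(Hd+H\log H\cdot n^2+Hn^4)$; together with the $O(n^2d)$ construction of the sets $D_{ij}$ and the back-transformation this yields the stated $O(d^2+n^2\max\{d\log d,\,dn^2\})$. If you insist on re-deriving the kernel, you must do that data-structure engineering yourself, and your count $H!\cdot(k+1)^H$ applies per uniformity class and still has to be summed over set sizes $1,\dots,H$. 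One further small point worth recording: the back-transformed matrix has every row differing from $\mathbf{0}$ in at most $H$ columns, so its maximum pairwise row Hamming distance is only bounded by $2H$, not $H$; this is harmless for a kernel because the new parameter is bounded in terms of the old one, but it should be stated explicitly.
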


  \begin{proof}
    The idea for both results is to define a polynomial-time parameterized
    many-one reduction from \DV{} to \fHS{}, which is the special case of \HS{} where
    each input set has cardinality at most~$H$.
    We can then apply known kernelization and approximation algorithms
    to the \fHS{} instance.
    The reduction works as follows:
    Given an instance~$(S,k)$ of \DV{}, the \fHS{}
    instance~$(U,\mathcal{C},k')$ is defined as
    \[
      U:=[d],
      \mathcal{C}:=\{C_{ij}\subseteq U\mid i\neq j\in[n]\}, \text{
        where } C_{ij}:=\{u\in U\mid s_{iu}\neq s_{ju}\},
    \]
    and~$k':=k$. Note that $|C_{ij}|\leq H$ holds for all $i\neq j$.
    This reduction requires $O(n^2d)$ arithmetic operations.
    It is correct since~$K\subseteq[d]$ with $|K|\leq k$ is a solution of $(S,k)$
    if and only if for every pair of rows in~$S$ there is at least one column in~$K$
    in which both rows have different values.
    This is equivalent to the situation that~$K$ contains at least one element
    from each~$C_{ij}$ in~$\mathcal{C}$, which implies that~$K$ is a
    solution of~$(U,\mathcal{C},k)$.
    
    We now prove the two statements of the theorem using the above
    reduction.

    \ref{thm:dv_kernel_hk_i}.) A factor-$H$ approximation algorithm
    repeatedly adds a so far unhit subset to the hitting set.
    
    \ref{thm:dv_kernel_hk_ii}.)
    Let~$(U,\mathcal{C},k)$ with~$|U|=d$ and~$|\mathcal{C}|\in O(n^2)$
    be the \fHS{} instance resulting from the above reduction.
    We apply a \fHS{} kernelization due to \citet{bevern13} in order to obtain
    in~$O(Hd + H\log H\cdot n^2 + Hn^4)$ time an
    instance~$(U',\mathcal{C}',k)$, where~$|U'|$
    and~$|\mathcal{C}'|$ are at most~$g(H,k)$.
    In order to obtain a problem kernel for \DV{},
    we transform the instance~$(U',\mathcal{C}',k)$ back by the
    reduction from the proof of
    \autoref{thm:distinct_vectors_w2-hard}.
    We end up with a \DV{} instance~$(S',k')$ with~$k'=k$ in
    $O(|U'|\cdot|\mathcal{C}'|)=O(n^2d)$~time.
    Since~$(U',\mathcal{C}',k)$ is an instance of \fHS,
    it follows that each row in~$S'$ differs from~$\mathbf{0}$ in
    at most~$H$ columns.
    Thus, each pair of rows in~$S'$ differs in at most~$H'\le 2H$
    columns.
    Note that~$k'$ and~$H'$ depend only on~$k$ and~$h$,
    which also holds for the overall size of~$S'$, which is in
    $O(|U'|\cdot|\mathcal{C}'|\log|\mathcal{C}'|)= O(g(H,k)^2\log (g(H,k)))$.
    Moreover, the overall running time is in $O(n^2d + Hd + H\log
    H\cdot n^2 + Hn^4)$, which gives a problem kernel.
  \end{proof}
  
  In this section, we have seen that \DV{} can basically be
  regarded as a special \HS{} problem. \HS{} in
  general is \W{2}-complete \cite{downey13} with respect to the
  solution size,
  but \DV{} is fixed-parameter tractable with respect to the solution
  size for constant-size alphabets
  (\autoref{thm:dv_kernel_sk}).
  Thus, the set systems induced by constant-size alphabet instances of
  \DV{} involve a certain structure (that is, the number of subsets is
  exponentially upper-bounded in the size of the solution) that makes them
  somewhat easier to solve.
  Generalizing the analysis of the structure as we did in
  \autoref{sec:DVbinary} for binary alphabets to arbitrary alphabets
  deserves further investigation.

\section{Conclusion}
\label{sec:conc}
We conclude with a few challenges for future research.
Based on pairwise minimum and maximum Hamming distances, we 
proved a complexity dichotomy for \textsc{Distinct Vectors} 
restricted to binary matrices. 
We leave generalizations of the polynomial-time solvable cases to non-binary alphabets as a major open question.
A further interesting question is whether one can close the gap 
between the doubly-exponential upper and the superpolynomial lower bound 
for the size of the problem kernel for \textsc{Distinct Vectors} parameterized by
the combined parameter ``alphabet size and number of remaining columns''.

From a combinatorial point of view, the study of ``vector problems'' in general seems to be a fertile but little researched area for parameterized complexity studies.
Another example of a parameterized complexity analysis for a vector problem deals with the explanation of integer vectors by few homogenous segments~\cite{BCHKNS15}.
Finally, on a more general scale, it seems that parameterized complexity 
analysis is a promising tool to better assessing the computational
complexity of some machine learning problems such as combinatorial feature
selection; our work is among the first contributions in 
this so far widely neglected research direction.


\bibliographystyle{abbrvnat}
\bibliography{ref}
\end{document}